\documentclass[a4paper,headings=standardclasses]{scrartcl}
\pdfoutput=1
\usepackage[utf8]{inputenc}

\usepackage[T1]{fontenc} 

\usepackage[british]{babel}

\usepackage[osf,p,largesc,nohelv]
  {newpxtext} 
\usepackage{amsmath,amsthm}
\usepackage[varbb]{newpxmath} 
\renewcommand{\partial}{\uppartial} 
\usepackage{mathtools}
\usepackage{tensor}
\usepackage{accents}
\usepackage{enumitem}
\usepackage{geometry}
\usepackage{makecell}
\usepackage{caption}

\usepackage{accsupp}
\newcommand*{\acronym}[1]{\texorpdfstring{%
    \protect\BeginAccSupp{%
      method=pdfstringdef,%
      ActualText=#1
    }%
    \textsc{\textls[40]{\MakeLowercase{#1}}}%
    \protect\EndAccSupp{}}%
  {#1}%
}

\usepackage{csquotes} 

\usepackage{pdflscape} 
\usepackage{afterpage}
\usepackage{booktabs}

\usepackage{microtype} 

\usepackage{authblk}

\usepackage[colorlinks,bookmarksnumbered=true]{hyperref}
\usepackage[nameinlink]{cleveref}

\usepackage[style=ext-authoryear-comp,maxcitenames=2,uniquename=false,
  backend=biber,uniquelist=false,articlein=false]{biblatex}

\DeclareNameAlias{sortname}{family-given}

\DefineBibliographyExtras{british}{\def\finalandcomma{\addcomma}}

\newcommand*{\e}{\mathrm{e}} 
\newcommand*{\D}{\mathrm{d}} 
\newcommand*{\R}{\mathbb R}
\newcommand*{\so}{\mathfrak{so}}
\newcommand*{\gamman}{{^{(n)}\gamma}}
\DeclareMathOperator{\Span}{span}
\DeclareMathOperator{\rk}{rk}

\theoremstyle{plain}
\newtheorem{thm}{Theorem}[section]
\newtheorem{prop}[thm]{Proposition}

\theoremstyle{definition}
\newtheorem{defn}[thm]{Definition}
\newtheorem{rem}[thm]{Remark}
\newtheorem{constr}[thm]{Construction}

\AddToHook{env/prop/begin}{\crefalias{thm}{prop}}
\AddToHook{env/lem/begin}{\crefalias{thm}{lem}}
\AddToHook{env/cor/begin}{\crefalias{thm}{cor}}
\AddToHook{env/defn/begin}{\crefalias{thm}{defn}}
\AddToHook{env/rem/begin}{\crefalias{thm}{rem}}
\AddToHook{env/constr/begin}{\crefalias{thm}{constr}}
\AddToHook{env/exmp/begin}{\crefalias{thm}{exmp}}

\makeatletter
\mathchardef\ordinarycolon\mathcode`\:%
\mathcode`\:=\string"8000%
\begingroup \catcode`\:=\active%
  \gdef:{\@ifnextchar{=}
    {\coloneq\@gobble}
    {\ordinarycolon}}%
\endgroup
\AtBeginDocument{\mathcode`\==\string"8000} 
\mathchardef\ordinaryequals\mathcode`\=%
\begingroup \catcode`\==\active%
  \gdef={\@ifnextchar{:}
    {\eqcolon\@gobble}
    {\ordinaryequals}
  }%
\endgroup
\makeatother

\numberwithin{equation}{section}

\title{Weyl-type theorems in Galilei and Carroll geometry}

\author[1,a]{Philip K. Schwartz}
\author[2,b]{James Read}
\author[3,c]{Quentin Vigneron}
\affil[1]{Institute of Theoretical Physics,
  Leibniz University Hannover, \par
  Appelstraße 2, 30167 Hannover, Germany}
\affil[2]{Faculty of Philosophy, University of Oxford, \par
  Oxford \acronym{OX2 6GG}, United Kingdom}
\affil[3]{Centre de Recherche Astrophysique de Lyon, \acronym{ENS} de
  Lyon, Université Claude Bernard Lyon 1, \acronym{CNRS}, \par
  Lyon 69007, France}
\affil[a]{\normalfont\texttt{\href{mailto:philip.schwartz@itp.uni-hannover.de}
    {philip.schwartz@itp.uni-hannover.de}}}
\affil[b]{\normalfont\texttt{\href{mailto:james.read@philosophy.ox.ac.uk}
    {james.read@philosophy.ox.ac.uk}}}
\affil[c]{\normalfont\texttt{\href{mailto:quentin.vigneron@ens-lyon.fr}
    {quentin.vigneron@ens-lyon.fr}}}

\date{}

\begin{document}
\maketitle

\vspace{-3\baselineskip}

\begin{abstract}
  \noindent
  A classic theorem of \textcite{Weyl:1921} states that a \emph{Weyl
    metric}---a natural generalisation of a pseudo-Riemannian
  metric---is uniquely determined by its conformal and projective
  structures (i.e.\ by its conformal structure and its set of
  unparametrised geodesics).  An equivalent formulation of Weyl's
  result is that a torsion-free linear connection compatible with a
  pseudo-Riemannian conformal structure is uniquely determined by its
  projective structure.  We discuss analogous results for suitably
  defined notions of conformal structure for Galilei and Carroll
  geometry, i.e. for spacetime geometries arising as the
  `non-relativistic' and `ultra-relativistic' limits of Lorentzian
  geometry.
\end{abstract}

\section{Introduction}

In \citeyear{Weyl:1921}, Hermann Weyl proved that a given `Weyl
metric'---of which a Lorentzian metric is (in a certain sense to be
made precise below) a special case---is fixed uniquely by its
associated projective and conformal structures (respectively:
unparametrised geodesics of the associated connection, and---in the
case of indefinite signature---lightcone structure).  This result is a
uniqueness theorem in the sense that those projective and conformal
structures fix the Weyl metric uniquely: there are no \emph{other}
Weyl metrics which they can be used to define.  Some fifty years
later, in 1972, \textcite{EPS:1972, EPS:2012} (`\acronym{EPS}') proved
\emph{inter alia} what can understood to be a related existence
theorem: given a projective structure and a conformal structure, and
given further assumptions regarding their `compatibility' (roughly:
the unparametrised timelike geodesics given by the projective
structure all lie inside the timelike lobes of the lightcones given by
the conformal structure), they can be used to construct a Weyl metric.

Since that work, many details have been shored up---see
\textcite{Adlam.Linnemann.Read:2025} and references therein for a
detailed study.  And yet, there also remain technical questions
regarding Weyl's theorem which have yet to be answered decisively.  In
particular, one might ask: is it possible to prove versions of Weyl's
theorem in alternative spacetime geometries, in particular the
`non-relativistic' and `ultra-relativistic' settings of Galilei and
Carroll geometries?\footnote{We use here the common designations
  `non-relativistic' and `ultra-relativistic' for Galilei and Carroll
  geometry, even though Galilean/Newtonian and Carrollian physics are
  relativistic as well, in the sense of satisfying the principle of
  relativity.  The principle is just implemented differently than in
  special or general relativity (or their modifications), namely by
  the Galilei or Carroll group, respectively, instead of the Poincaré
  group.}  It is upon such questions that we focus in the present
article, giving an affirmative answer for the case of Galilei
geometry, but showing that only a weaker result is provable in the
case of Carroll geometry.

In the Galilei case, there is some precedent for work on
`non-relativistic' versions of Weyl's theorem.  First,
\textcite{Ewen.Schmidt:1989} proved a version of this theorem---but
using a notion of Galilean conformal structure which, for reasons
discussed by \textcite{March:2023, Adlam.Linnemann.Read:2025}, is
arguably conceptually inadequate.  (Essentially,
\citeauthor{Ewen.Schmidt:1989} define conformal structure as spacelike
projective structure, which is arguably insufficiently related to a
notion of Galilean scale transformations.)  Moreover,
\citeauthor{Ewen.Schmidt:1989} market their result as a Galilean
version of the construction of \textcite{EPS:1972, EPS:2012}, but as
pointed out by \textcite[ch.\ 4]{Adlam.Linnemann.Read:2025} this is to
confuse existence and uniqueness results of the kind discussed above.
(What \citeauthor{Ewen.Schmidt:1989} prove is an analogue of the 1921
uniqueness theorem of \citeauthor{Weyl:1921}, not the 1972 existence
result of \acronym{EPS}---see \citeauthor[ch.\
4]{Adlam.Linnemann.Read:2025} for further discussion.)  After having
presented a `non-relativistic' version of the \acronym{EPS}
construction using an improved notion of Galilean conformal structure
proposed by \textcite{March:2023},\footnote{\label{fn:1}A precursor to
  this notion of Galilean conformal structure can be found in
  \textcite{Curiel:2015}, but it is more specialised than that
  proposed by \textcite{March:2023}.  Curiel also claims to prove a
  Galilean version of Weyl's theorem, but the proof is erroneous as it
  omits crucial data needed to fix a connection uniquely with respect
  to a given Galilei geometry---namely, a choice of timelike vector
  field.  We discuss these extra data in detail in the body of this
  article.}  \textcite[\S\,4.5.1]{Adlam.Linnemann.Read:2025} assert
that for a `non-relativistic' version of Weyl's theorem `the proof is
exactly analogous to that sketched for the relativistic case', but do
not go into further details; these details are filled in by
\textcite{March:2025}, but only for Galilei structures on the
assumption of metric compatibility.\looseness-1

To expand on this final point, return for a moment to the Lorentzian
case.  In that setting, a special case of Weyl's result for Lorentzian
(rather than Weyl) metrics is presented by \textcite[ch.\
2]{Malament:2012}---this special case assumes metric compatibility.
Likewise, because \textcite{March:2025} assumes metric compatibility
in the Galilean context, what remains to be accomplished is a more
general (and directly analogous) version of Weyl's result for the
Galilean setting, in which one shows that a Galilean Weyl structure
(suitably defined---see \textcite[ch.\ 4]{Adlam.Linnemann.Read:2025}
and discussion below) is fixed uniquely by its associated projective
and conformal structures (plus, possibly, other relevant
data\footnote{See \cref{fn:1}.}).  Completing this work shores up and
completes our understanding of Galilean versions of Weyl's
theorem.\looseness-1

With this work on Galilean versions of Weyl's theorem complete, in
this article we in addition consider the case of Carroll versions of
the result.  (Recall that Carollian spacetime structures are the
`ultra-relativistic' versions of Lorentzian spacetime structures, in
which one sends $c \rightarrow 0$ and as such `narrows' the
lightcones---see \textcite{March.Read:2025} for a pedagogical review,
and \textcite{Vigneron.Barzegar.Read:2025} for results regarding
general Carrollian connections upon which we draw in this article.)
The conjecture here---in parallel with the Lorentzian and Galilean
cases---is that one can define a suitable notion of Carrollian
conformal structure such that a given Carrollian Weyl structure
(again, suitably defined) is fixed uniquely by its associated
projective and conformal structures (plus potential further relevant
data, as in the Galilean case).  In the second half of this article,
we show that a version of Weyl's theorem can indeed be proved in the
Carroll context, but the result is strictly weaker than the Lorentzian
and Galilean Weyl-type theorems: a Carrollian Weyl structure is
\emph{not} fixed uniquely by its associated projective and conformal
structures (plus potential further relevant data), although this
failure of uniqueness can be quantified in a precise
way.\footnote{\label{fn:Carr_constr_ax}We leave the project of a
  Carrollian version of the \acronym{EPS} construction---i.e.,\
  Carroll constructive axiomatics---for another day.  The failure of a
  Carroll version of Weyl's theorem might, however, be taken to
  indicate that there will exist serious roadblocks to such a
  construction.}

As such, the structure of this article is as follows.  In
\cref{sec:projective}, we review the notion of projective structure,
since this is in fact the same structure in all of the three cases
which we consider.  In \cref{sec:pseudo-Riemann}, we review the
definition of Lorentzian conformal structure and recall the essential
contours of the original theorem of \textcite{Weyl:1921}.  In
\cref{sec:Galilei}, we do the same for the Galilean case.  In
\cref{sec:Carroll}, we define Carrollian conformal structures and
prove a (weakened) Carrollian version of Weyl's theorem.  We close in
\cref{sec:concl} with some reflections on our results and potential
directions for future research.  (For the impatient, a concise summary
of our results may be found in \cref{tab:summary}.)  In
\cref{sec:app_class_conn_Carroll,sec:app_free_torsion_conf_Carroll},
we discuss some aspects of the classification of connections on
Carroll manifolds that are used in the main text.

\section{Projective structures}
\label{sec:projective}

In this section, we review some textbook material on projective
structures which will prove essential going forwards.  Further
discussion of projectively equivalent connections can be found in
e.g.\ \textcite{Malament:2012, Adlam.Linnemann.Read:2025}.

\begin{defn}
  Two linear connections on a differentiable manifold $M$ are
  \emph{projectively equivalent} if they have the same autoparallels
  (i.e.\ unparametrised geodesics).  A projective equivalence class of
  connections is a \emph{projective structure} on $M$.
\end{defn}

\begin{rem}
  Unlike much of the existing literature on projective structures,
  here we do \emph{not} demand the connections to be torsion-free.
  For this reason, we also provide the proof of the following standard
  result in the general case with possibly non-vanishing torsion:
\end{rem}

\begin{prop}
  Two linear connections on $M$ are projectively equivalent if and
  only if their difference tensor $D^\rho_{\mu\nu}$ satisfies
  \begin{equation}
    D^\rho_{(\mu\nu)}
    = \delta^\rho_{(\mu} \eta^{\vphantom{\rho}}_{\nu)}
  \end{equation}
  for some one-form $\eta \in \Omega^1(M)$.

  \begin{proof}
    A direct computation shows that if the difference tensor has the
    stated form, the two connections have the same autoparallels.

    For the converse, assume that two connections have the same
    autoparallels.  This implies that for any autoparallel $\gamma$ we
    have $D^\rho_{\mu\nu}(\gamma(t))) \dot\gamma^\mu(t)
    \dot\gamma^\nu(t) = f(t) \dot\gamma^\rho(t)$ for some function
    $f$.  Multiplying with $\dot\gamma^\sigma$ and antisymmetrising in
    $\rho,\sigma$, we obtain
    \begin{equation}
      0
      = \dot\gamma^{[\sigma\vphantom{\rho}}_{\vphantom{\mu}}
        D^{\rho]}_{\mu\nu} \dot\gamma^\mu \dot\gamma^\nu
      = D^{[\rho}_{\mu\nu} \delta^{\sigma]\vphantom{\rho}}_\kappa
        \dot\gamma^\mu \dot\gamma^\nu \dot\gamma^\kappa .
    \end{equation}
    Since any vector $v \in TM$ is tangent to some autoparallel, this
    shows that for all $v$ we have
    \begin{equation}
      D^{[\rho}_{\mu\nu} \delta^{\sigma]\vphantom{\rho}}_\kappa
        v^\mu v^\nu v^\kappa
      = 0,
    \end{equation}
    which is equivalent to
    \begin{equation}
      D^{[\rho}_{(\mu\nu} \delta^{\sigma]\vphantom{\rho}}_{\kappa)}
      = 0.
    \end{equation}

    Writing out the symmetrisation, this is equivalent to
    \begin{equation}
      D^{[\rho}_{(\mu\nu)} \delta^{\sigma]\vphantom{\rho}}_\kappa
      + D^{[\rho}_{(\mu\kappa)} \delta^{\sigma]\vphantom{\rho}}_\nu
      + D^{[\rho}_{(\nu\kappa)} \delta^{\sigma]\vphantom{\rho}}_\mu
      = 0.
    \end{equation}
    Contracting $\sigma$ and $\kappa$, we obtain
    \begin{align}
      0
      &= D^\rho_{(\mu\nu)} \delta^\sigma_\sigma
        - D^\sigma_{(\mu\nu)} \delta^\rho_\sigma
        + D^\rho_{(\mu\sigma)} \delta^\sigma_\nu
        - D^\sigma_{(\mu\sigma)} \delta^\rho_\nu
        + D^\rho_{(\nu\sigma)} \delta^\sigma_\mu
        - D^\sigma_{(\nu\sigma)} \delta^{\rho}_\mu \nonumber\\
      &= n D^\rho_{(\mu\nu)}
        - D^\rho_{(\mu\nu)}
        + D^\rho_{(\mu\nu)}
        - D^\sigma_{(\mu\sigma)} \delta^\rho_\nu
        + D^\rho_{(\mu\nu)}
        - D^\sigma_{(\nu\sigma)} \delta^\rho_\mu \nonumber\\
      &= (n+1) D^\rho_{(\mu\nu)}
        - D^\sigma_{(\mu\sigma)} \delta^\rho_\nu
        - D^\sigma_{(\nu\sigma)} \delta^\rho_\mu
    \end{align}
    where $n = \dim M$.  This shows that $D^\rho_{(\mu\nu)} =
    \delta^\rho_{(\mu} \eta^{\vphantom{\rho}}_{\nu)}$ with $\eta_\mu =
    \frac{2}{n+1} D^\sigma_{(\mu\sigma)}$.
  \end{proof}
\end{prop}

\section{The pseudo-Riemannian case}
\label{sec:pseudo-Riemann}

We next recall the definitions of conformal structures and Weyl
metrics in the pseudo-Riemannian setting (which includes as a special
case the Lorentzian, i.e.\ relativistic, one).  Having presented these
details, we then recall the above-mentioned theorem of
\textcite{Weyl:1921}.  All of this material is standard---see e.g.\
\textcite{Adlam.Linnemann.Read:2025} and references therein.

\begin{defn}
  Let $M$ be a differentiable manifold.
  \begin{enumerate}[label=(\roman*)]
  \item A \emph{pseudo-Riemannian conformal structure} on $M$ is an
    equivalence class $[g]$ of pseudo-Riemannian metrics on $M$ under
    the equivalence relation $g \sim \e^\lambda g$ for $\lambda \in
    C^\infty(M)$.
  \item A linear connection $\nabla$ on $M$ is \emph{compatible} with
    a pseudo-Riemannian conformal structure if for some (and then any)
    representative $g$ of the conformal structure there is a one-form
    $\varphi \in \Omega^1(M)$ such that $\nabla g = \varphi \otimes
    g$.
  \end{enumerate}
\end{defn}

\begin{rem}
  \begin{enumerate}[label=(\roman*)]
  \item If $\nabla g = \varphi \otimes g$, then for any conformally
    equivalent metric we have $\nabla(\e^\lambda g) = \e^\lambda
    \nabla g + \D\e^\lambda \otimes g = \varphi \otimes \e^\lambda g +
    \D\lambda \otimes \e^\lambda g = (\varphi + \D\lambda) \otimes
    \e^\lambda g$.  This shows the `and then any' part in the
    definition of compatible connections (and hence that the notion of
    compatibility is well-defined).
  \item Further, this shows that a pair $([g], \nabla)$ of a
    pseudo-Riemannian conformal structure and a compatible connection
    defines a specific equivalence class $[g,\varphi]$ of pairs of
    metrics and one-forms:
  \end{enumerate}
\end{rem}

\begin{defn}
  A \emph{Weyl metric} on a differentiable manifold $M$ is an
  equivalence class $[g,\varphi]$ of pairs of pseudo-Riemannian
  metrics and one-forms under the equivalence relation $(g, \varphi)
  \sim (\e^\lambda g, \varphi + \D\lambda)$ for $\lambda \in
  C^\infty(M)$.
\end{defn}

\begin{prop}
  Let $M$ be a differentiable manifold.  The natural map from the set
  of pairs of a pseudo-Riemannian conformal structure and a
  torsion-free compatible connection on $M$ to the set of Weyl metrics
  on $M$, mapping $([g], \nabla)$ to $[g, \varphi]$ where $\nabla g =
  \varphi \otimes g$, is a bijection.

  \begin{proof}
    We have seen above that a conformal structure and a compatible
    connection determine a well-defined Weyl metric.  Conversely, any
    representative $(g,\varphi)$ of a Weyl metric defines a unique
    torsion-free connection $\nabla$ satisfying $\nabla g = \varphi
    \otimes g$, since torsion-free connections are uniquely determined
    by their non-metricity with respect to a pseudo-Riemannian metric.
    By construction, this connection is compatible with $[g]$.
    Furthermore, the above computation shows that $\nabla$ is already
    determined by the equivalence class (i.e.\ the Weyl metric).
  \end{proof}
\end{prop}

\begin{thm}[\cite{Weyl:1921}]
  \label{thm:pseudo-Riemann_Weyl}
  Let $M$ be a differentiable manifold with $\dim M > 1$.  Let $[g]$
  be a pseudo-Riemannian conformal structure on $M$.  If two
  torsion-free linear connections $\nabla, \tilde{\nabla}$ are
  compatible with $[g]$ and projectively equivalent, then they agree,
  $\nabla = \tilde{\nabla}$.

  Put differently: torsion-free connections that are compatible with a
  given pseudo-Riemannian conformal structure are uniquely determined
  by their projective structure.

  \begin{proof}
    Fix some representative $g$ of the conformal structure.  By
    compatibility, there are one-forms $\varphi, \tilde{\varphi}$ such
    that
    \begin{equation}
      \nabla g = \varphi \otimes g,
      \tilde{\nabla} g = \tilde{\varphi} \otimes g.
    \end{equation}
    By projective equivalence, there is a one-form $\eta$ such that
    the difference tensor of the connections is given by
    \begin{equation}
      (\nabla - \tilde{\nabla})^\rho_{\mu\nu}
      = \delta^\rho_{(\mu} \eta^{\vphantom{\rho}}_{\nu)} \; .
    \end{equation}
    Note that due to torsion-freeness, this holds without
    symmetrisation of the difference in its lower indices.

    Combining these equations, we obtain
    \begin{align}
      \label{eq:Weyl_pf_1}
      \varphi_\rho g_{\mu\nu}
      &= \nabla_\rho g_{\mu\nu} \nonumber\\
      &= \tilde{\nabla}_\rho g_{\mu\nu}
        - (\nabla - \tilde{\nabla})^\kappa_{\rho\mu} g_{\kappa\nu}
        - (\nabla - \tilde{\nabla})^\kappa_{\rho\nu} g_{\mu\kappa}
        \nonumber\\
      &= \tilde{\varphi}_\rho g_{\mu\nu}
        - \delta^\kappa_{(\rho} \eta^{\vphantom{\kappa}}_{\mu)}
          g_{\kappa\nu}
        - \delta^\kappa_{(\rho} \eta^{\vphantom{\kappa}}_{\nu)}
          g_{\mu\kappa} \nonumber\\
      &= \tilde{\varphi}_\rho g_{\mu\nu}
        - \eta_{(\mu} g_{\rho)\nu}
        - g_{\mu(\rho} \eta_{\nu)} \nonumber\\
      &= \tilde{\varphi}_\rho g_{\mu\nu}
        - \eta_{\rho} g_{\mu\nu}
        - g_{\rho(\mu} \eta_{\nu)} \; .
    \end{align}
    Contracting \eqref{eq:Weyl_pf_1} with $g^{\mu\nu}$ yields
    \begin{equation}
      \label{eq:Weyl_pf_2}
      n \varphi_\rho
      = n \tilde{\varphi}_\rho - n \eta_\rho
        - g_{\rho\mu} g^{\mu\nu} \eta_\nu
      = n \tilde{\varphi}_\rho - (n+1) \eta_\rho \; ,
    \end{equation}
    where $n = \dim M$.  On the other hand, contracting
    \eqref{eq:Weyl_pf_1} with $g^{\mu\rho}$ yields
    \begin{equation}
      \label{eq:Weyl_pf_3}
      \varphi_\nu
      = \tilde{\varphi}_\nu - \eta_\nu
        - g^{\mu\rho} \frac{1}{2} (g_{\rho\mu} \eta_\nu
          + g_{\rho\nu} \eta_\mu)
      = \tilde{\varphi}_\nu - \eta_\nu - \frac{n+1}{2} \eta_\nu
      = \tilde{\varphi}_\nu - \frac{n+3}{2} \eta_\nu \; .
    \end{equation}

    Subtracting $n \cdot \eqref{eq:Weyl_pf_3}$ from
    \eqref{eq:Weyl_pf_2} (and renaming the free index), we have
    \begin{equation}
      0
      = -(n+1) \eta_\nu + \frac{n(n+3)}{2} \eta_\nu
      = \frac{n^2 + n - 2}{2} \eta_\nu \; .
    \end{equation}
    The expression $n^2 + n - 2$ vanishes if and only if $n \in \{-2,
    1\}$.  Thus, since we assumed $n > 1$, we obtain $\eta_\nu = 0$,
    finishing the proof.
  \end{proof}
\end{thm}

\begin{rem}
  \begin{enumerate}[label=(\roman*)]
  \item Since Weyl metrics are equivalently determined by their
    conformal structure and their associated torsion-free connection,
    Weyl's theorem may also be formulated as the statement that
    \emph{Weyl metrics are uniquely determined by their conformal and
      projective structures}.  This is the original formulation of the
    theorem as given by \citeauthor{Weyl:1921}.
  \item \textcite[Proposition 2.1.4]{Malament:2012} is a special case
    of this formulation of Weyl's theorem for pseudo-Riemannian
    metrics: \emph{pseudo-Riemannian metrics are determined by their
      conformal and projective structure up to multiplication by a
      locally constant function} (where the projective structure is
    that of the metric's Levi-Civita connection).  This follows from
    Weyl's theorem since the Levi-Civita connection $\nabla$ of $g$ is
    the torsion-free connection corresponding to the Weyl metric
    $[g,0]$, and an \emph{integrable} Weyl metric---i.e.\ one such
    that for all representatives $(\tilde{g}, \tilde{\varphi})$ the
    one-form $\tilde{\varphi}$ is exact---determines a
    pseudo-Riemannian metric up to a locally constant factor (namely
    any metric $\tilde{g}$ such that $(\tilde{g}, 0)$ is a
    representative).
  \item Note that for the equivalence between Weyl metrics and pairs
    of conformal structures and torsion-free compatible connections,
    it was crucial that a torsion-free connection be uniquely
    determined by its non-metricity.  If this were not the case,
    several connections could correspond to the same Weyl metric.
    However, \emph{Weyl's theorem does not need this one-to-one
      correspondence, as long as it is understood as a statement about
      uniqueness of connections}.  This will become important in the
    Galilei and Carroll cases: our Weyl-type theorems will be
    concerned with uniqueness of connections.  Only when adding extra
    data on top of the non-metricity form $\varphi$ (and part of the
    torsion, see below), they may be understood as statements on the
    uniqueness of (suitably defined) analogues of Weyl metrics.
  \end{enumerate}
\end{rem}

\begin{constr}
  We may include torsion into this picture as follows.  For two
  potentially \emph{torsionful} connections $\nabla$ and
  $\tilde{\nabla}$, if we assume their torsion to agree, $T =
  \tilde{T}$, this means that their difference tensor is symmetric,
  $(\nabla - \tilde{\nabla}) ^\rho_{\mu\nu} = (\nabla -
  \tilde{\nabla})^\rho_{(\mu\nu)}$.  Since the proof of Weyl's theorem
  (\cref{thm:pseudo-Riemann_Weyl}) as presented above used the
  assumption of torsion-freeness only for this equality, it applies in
  exactly the same manner to the case of two connections with equal
  torsion.

  Thus, we obtain the following torsionful generalisation of Weyl's
  theorem:
\end{constr}

\begin{thm}[Weyl's theorem with torsion]
  \label{thm:pseudo-Riemann_Weyl_torsion}
  Let $M$ be a differentiable manifold with $\dim M > 1$.  Let $[g]$
  be a pseudo-Riemannian conformal structure on $M$.  Let $\nabla,
  \tilde{\nabla}$ be two linear connections whose torsions agree, $T =
  \tilde{T}$.  If the connections are compatible with $[g]$ and
  projectively equivalent, then they agree, $\nabla = \tilde{\nabla}$.

  Put differently: connections that are compatible with a given
  pseudo-Riemannian conformal structure are uniquely determined by
  their torsion and their projective structure.  \qed
\end{thm}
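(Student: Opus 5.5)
The plan is to reduce the torsionful statement to the computation already carried out in the proof of \cref{thm:pseudo-Riemann_Weyl}, following the remark in the preceding construction. Let $D = \nabla - \tilde{\nabla}$ denote the difference tensor. Since both connections have the same torsion, and the torsion is (up to a factor) the antisymmetric part of the connection coefficients, we get $D^\rho_{[\mu\nu]} = \frac12 (T - \tilde{T})^\rho_{\mu\nu} = 0$. Hence $D^\rho_{\mu\nu} = D^\rho_{(\mu\nu)}$, with no need to assume torsion-freeness.

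Next I would invoke the projective-structure proposition of \cref{sec:projective}, which was proved without any torsion assumption. Projective equivalence then gives a one-form $\eta$ with $D^\rho_{(\mu\nu)} = \delta^\rho_{(\mu}\eta^{\vphantom{\rho}}_{\nu)}$. By the previous step this yields $D^\rho_{\mu\nu} = \delta^\rho_{(\mu}\eta^{\vphantom{\rho}}_{\nu)}$ on the nose. This is exactly the input used in the proof of \cref{thm:pseudo-Riemann_Weyl}.

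Then I would repeat that proof verbatim. Fix a representative $g$ and compatibility forms $\varphi, \tilde\varphi$, and express $\nabla g$ in terms of $\tilde\nabla g$ and $D$ to obtain \eqref{eq:Weyl_pf_1}. Contracting once with $g^{\mu\nu}$ and once with $g^{\mu\rho}$ gives \eqref{eq:Weyl_pf_2} and \eqref{eq:Weyl_pf_3}. Eliminating $\varphi - \tilde\varphi$ gives $\frac{n^2+n-2}{2}\eta = 0$, and since $n>1$ this forces $\eta = 0$. Therefore $D = 0$, i.e.\ $\nabla = \tilde\nabla$. As a by-product, $\varphi = \tilde\varphi$.

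There is no real obstacle here. The one point to check is that \eqref{eq:Weyl_pf_1} uses the full tensor $D^\kappa_{\rho\mu}$ with the derivative index first, and not merely its symmetric part. This is precisely where equality of torsions is needed, and the first step guarantees it.
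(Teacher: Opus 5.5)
Your proposal is correct and matches the paper's own argument. Equal torsion makes the difference tensor symmetric, so $D^\rho_{\mu\nu} = \delta^\rho_{(\mu}\eta_{\nu)}$ holds without symmetrisation; this was the only point where torsion-freeness entered the proof of \cref{thm:pseudo-Riemann_Weyl}, and with it (and the torsion-agnostic projective-equivalence proposition) that proof goes through verbatim.
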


\begin{rem}
  If we define a \emph{torsional Weyl metric} as a pair $([g,
  \varphi], T)$ of a usual Weyl metric $[g, \varphi]$ and a tensor
  field $T \in \Gamma(TM \otimes \bigwedge^2 T^*M)$, then these are in
  natural bijection to pairs $([g], \nabla)$ of pseudo-Riemannian
  conformal structures and compatible connections (possibly with
  torsion).  The above torsionful generalisation of Weyl's theorem may
  then be formulated as the statement that \emph{torsional Weyl
    metrics are uniquely determined by their torsion together with
    their conformal and projective structures}.
\end{rem}

\section{The Galilei case}
\label{sec:Galilei}

Our next task is to develop a parallel story to the above for the case
of Galilei geometry, on which see \textcite{Cartan:1923, Cartan:1924,
  Friedrichs:1928, Trautman:1963, Trautman:1965,
  Dombrowski.Horneffer:1964, Kuenzle:1972, Kuenzle:1976, Ehlers:1981a,
  Ehlers:1981b, Malament:2012, Hartong.EtAl:2023,
  Schwartz:NC_gravity}.  This builds upon prior work by
\textcite{Ewen.Schmidt:1989, Curiel:2015, March:2023, March:2025,
  Adlam.Linnemann.Read:2025}.

\subsection{Prerequisites}

\begin{defn}
  Let $M$ be a differentiable manifold with $\dim M > 1$.
  \begin{enumerate}[label=(\roman*)]
  \item A \emph{Galilei structure}\footnote{We use here the name
      `Galilei structure' for the pair of two `metric' tensor fields
      of a Galilei manifold, while this term commonly refers to the
      reduction of the structure group of the frame bundle induced by
      them (i.e.\ the Galilei frame bundle).  Since we make no use of
      frame bundles in this article, this should not cause confusion.}
    on $M$ is a pair $(\tau,h)$ of a nowhere-vanishing \emph{clock
      form} $\tau \in \Omega^1(M)$ and a \emph{space metric} $h \in
    \Gamma(\bigvee^2 TM)$ which is positive semidefinite of rank $n =
    \dim M - 1$ such that $\tau$ spans the degenerate direction of
    $h$, i.e.
    \begin{equation}
      \tau_\mu h^{\mu\nu} = 0.
    \end{equation}    
  \item A \emph{conformal Galilei structure} on $M$ is an equivalence
    class $[\tau,h]$ of Galilei structures under the equivalence
    relation $(\tau,h) \sim (\e^{\frac{1}{2}\lambda} \tau,
    \e^{-\lambda} h)$ for $\lambda \in C^\infty(M)$.
  \item A linear connection $\nabla$ on $M$ is \emph{compatible} with
    a conformal Galilei structure if for some (and then any)
    representative $(\tau,h)$ of the conformal structure there is a
    one-form $\varphi \in \Omega^1(M)$ such that $\nabla \tau =
    \frac{1}{2} \varphi \otimes \tau$, $\nabla h = - \varphi \otimes
    h$.
  \end{enumerate}
\end{defn}

\begin{rem}
  \begin{enumerate}[label=(\roman*)]
  \item We chose here a definition of conformal equivalence that uses
    related conformal factors for the clock form and space metric, as
    it would result from a Newtonian limit of a Lorentzian conformal
    transformation.  This is also the reason for the definition of
    compatibility using related one-forms.

    One could try to relax this relation (i.e.\ use independent
    conformal equivalence classes of clock forms and space
    metrics---see \textcite{Curiel:2015, March:2023,
      Adlam.Linnemann.Read:2025}).  However, we try to stay as close
    as possible to the Lorentzian origin of Galilei geometry here.
    Furthermore, our Weyl-type theorem needs this stronger notion of
    conformal structure.
  \item As in the pseudo-Riemannian case, the one-forms encoding the
    non-metricity of a compatible connection with respect to different
    representatives conformally related by $\lambda$ are related by
    $\varphi \mapsto \varphi + \D\lambda$.
  \item In Galilei geometry, the non-metricities of a connection need
    to satisfy two conditions (\cite{Schwartz:2025}).  The one
    relating only the non-metricities, namely $\tau_\mu \nabla_\rho
    h^{\mu\nu} = - h^{\mu\nu} \nabla_\rho \tau_\mu$, is satisfied for
    the non-metricities from the definition of compatibility.  The
    second condition involves the torsion: it reads
    \begin{subequations}
    \begin{align}
      \tau_\rho \tensor{T}{^\rho_{\mu\nu}}
      & = (\D\tau)_{\mu\nu} - 2 \nabla_{[\mu} \tau_{\nu]} \; . \\
      \intertext{For the above non-metricity, this becomes}
      \label{eq:Gal_conf_compat_torsion}
      \tau(T(\cdot, \cdot))
      &= \D\tau - \frac{1}{2} \varphi \wedge \tau .
    \end{align}
    \end{subequations}
    This shows that connections compatible with a conformal Galilei
    structure can only be torsion-free if $\tau$ satisfies the
    Frobenius integrability condition, and even then the non-metricity
    form $\varphi$ is related to $\D\tau$.  Thus in general compatible
    connections are not torsion-free.
  \end{enumerate}
\end{rem}

Since---differently to the pseudo-Riemannian case---we cannot in
general demand that compatible connections be torsion-free, we need to
allow for torsion.  However, we can hope for the uniqueness of
connections in a Galilean Weyl-type theorem only if we (partially)
restrict the torsion, similar to the torsionful version of Weyl's
theorem in the pseudo-Riemannian case
(\cref{thm:pseudo-Riemann_Weyl_torsion}).  We are going to fix the
`spatial' part of the torsion with respect to some specified timelike
direction, which, according to the classification result of
\textcite{Schwartz:2025}, is precisely the freely specifiable part of
the torsion:

\begin{defn}
  Given a conformal Galilei structure $[\tau, h]$ on a differentiable
  manifold $M$, the notion of a \emph{timelike direction} (i.e.\ a
  timelike vector field up to rescaling) with respect to $[\tau,h]$ is
  well-defined: it is an equivalence class $[v]$ of timelike vector
  fields, $\tau(v) \ne 0$, under the equivalence relation $v \sim
  \tilde{\lambda} v$ for $\tilde{\lambda} \in C^\infty(M, \R \setminus
  \{0\})$.  Given a timelike direction $[v]$, there is a well-defined
  unique \emph{spatial projector along $[v]$}, the vector bundle
  endomorphism $P$ of $TM$ defined by projecting onto $\ker\tau$
  (which is invariant under rescaling of $\tau$) and having kernel
  spanned by $v$ (which is invariant under rescaling of $v$).

  Fixing a representative $(\tau, h)$ of the conformal
  structure, there is a unique representative $v$ of the timelike
  direction satisfying $\tau(v) = 1$, and in terms of these the
  spatial projector takes the usual explicit form $P^\mu_\nu =
  \delta^\mu_\nu - v^\mu \tau_\nu$.
\end{defn}

\subsection{The Galilei Weyl-type theorem}

With these prerequisites, we can now prove a Weyl-type theorem:

\begin{thm}
  \label{thm:Gal_Weyl}
  Let $M$ be a differentiable manifold with $\dim M > 2$.  Let $[\tau,
  h]$ be a conformal Galilei structure on $M$, and $[v]$ a timelike
  direction with respect to $[\tau, h]$.  Let $\nabla, \tilde{\nabla}$
  be two linear connections whose spatial torsions with respect to
  $[v]$ agree, i.e.\ their torsions satisfy $P(T(\cdot,\cdot)) =
  P(\tilde{T}(\cdot, \cdot))$ where $P$ is the spatial projector along
  $[v]$.  If the connections are compatible with $[\tau, h]$ and
  projectively equivalent, then they agree, $\nabla = \tilde{\nabla}$.

  \begin{proof}
    Fix some representative $(\tau,h)$ of the conformal structure, and
    let $v$ be the corresponding representative of the timelike
    direction satisfying $\tau(v) = 1$.  Denote the difference tensor
    of the connections by $D := \nabla - \tilde{\nabla}$.  

    By projective equivalence, there is a one-form $\eta$ such that
    the difference tensor satisfies
    \begin{equation}
      \label{eq:Gal_Weyl_pf_proj_eq}
      D^\rho_{(\mu\nu)}
      = \delta^\rho_{(\mu} \eta^{\vphantom{\rho}}_{\nu)} \; .
    \end{equation}
    Furthermore, both connections having the same spatial torsion with
    respect to $[v]$ is equivalent to
    \begin{subequations}
    \begin{align}
      P^\sigma_\rho D^\rho_{[\mu\nu]}
      &= 0, \\
      \shortintertext{implying}
      \label{eq:Gal_Weyl_pf_diff_proj}
      P^\sigma_\rho D^\rho_{\mu\nu}
      &= P^\sigma_\rho D^\rho_{(\mu\nu)}
        = P^\sigma_{(\mu} \eta^{\vphantom{\sigma}}_{\nu)} \; .
    \end{align}
    \end{subequations}

    By compatibility of the connections with $[\tau, h]$, there are
    one-forms $\varphi, \tilde{\varphi}$ such that
    \begin{equation}
      \nabla \tau = \frac{1}{2} \varphi \otimes \tau,
      \tilde{\nabla} \tau = \frac{1}{2} \tilde{\varphi} \otimes \tau,
      \nabla h = - \varphi \otimes h,
      \tilde{\nabla} h = - \tilde{\varphi} \otimes h.
    \end{equation}

    Combining the two equations for compatibility with $\tau$, we
    obtain
    \begin{align}
      \label{eq:Gal_Weyl_pf_compat_tau}
      \frac{1}{2} \varphi_\mu \tau_\nu
      &= \nabla_\mu \tau_\nu \nonumber\\
      &= \tilde{\nabla}_\mu \tau_\nu - D^\rho_{\mu\nu} \tau_\rho
        \nonumber\\
      &= \frac{1}{2} \tilde{\varphi}_\mu \tau_\nu
        - D^\rho_{\mu\nu} \tau_\rho \; .
    \end{align}
    Contracting with $h^{\nu\sigma}$, from this we obtain
    \begin{equation}
      \label{eq:Gal_Weyl_pf_diff_tau_h}
      \tau_\rho D^\rho_{\mu\nu} h^{\nu\sigma} = 0.
    \end{equation}

    Symmetrising the free indices of \eqref{eq:Gal_Weyl_pf_compat_tau}
    yields
    \begin{align}
      \frac{1}{2} \varphi_{(\mu} \tau_{\nu)}
      &= \frac{1}{2} \tilde{\varphi}_{(\mu} \tau_{\nu)}
        - D^\rho_{(\mu\nu)} \tau_\rho \nonumber\\
      &= \frac{1}{2} \tilde{\varphi}_{(\mu} \tau_{\nu)}
        - \tau_{(\mu} \eta_{\nu)} \; ,
    \end{align}
    where we have used \eqref{eq:Gal_Weyl_pf_proj_eq} for the
    symmetrised difference tensor.  Contracting this with $v^\mu
    v^\nu$, we obtain
    \begin{equation}
      \label{eq:Gal_Weyl_pf_1}
      \frac{1}{2} \varphi(v)
      = \frac{1}{2} \tilde{\varphi}(v) - \eta(v).
    \end{equation}

    On the other hand, combining the two equations for compatibility
    with $h$, we obtain
    \begin{align}
      \label{eq:Gal_Weyl_pf_compat_h_1}
      -\varphi_\rho h^{\mu\nu}
      &= \nabla_\rho h^{\mu\nu} \nonumber\\
      &= \tilde{\nabla}_\rho h^{\mu\nu}
        + D^\mu_{\rho\kappa} h^{\kappa\nu}
        + D^\nu_{\rho\kappa} h^{\mu\kappa} \nonumber\\
      &= -\tilde{\varphi}_\rho h^{\mu\nu}
        + 2 D^{(\mu}_{\rho\kappa}
          h^{\nu)\kappa\vphantom{\mu}}_{\vphantom{\rho}} .
    \end{align}
    In order to simplify this equation, we consider its last term.
    From \eqref{eq:Gal_Weyl_pf_diff_tau_h}, we know that the
    expression $D^\rho_{\mu\nu} h^{\nu\sigma}$ vanishes when
    contracted with $\tau_\rho$; i.e.\ the index $\rho$ is spacelike.
    Therefore, the expression is equal to its projection onto space
    along $[v]$ on this index,
    \begin{subequations}
    \begin{align}
      D^\rho_{\mu\nu} h^{\nu\sigma}
      &= P^\rho_\lambda D^\lambda_{\mu\nu} h^{\nu\sigma} . \\
      \intertext{But the spacelike projected difference tensor can be
      expressed in terms of $\eta$ according to
      \eqref{eq:Gal_Weyl_pf_diff_proj},
      giving}
      \label{eq:Gal_Weyl_pf_diff_h}
      D^\rho_{\mu\nu} h^{\nu\sigma}
      &= P^\rho_{(\mu} \eta^{\vphantom{\rho}}_{\nu)} h^{\nu\sigma}
        \nonumber\\
      &= \frac{1}{2} (P^\rho_\mu \eta_\nu h^{\nu\sigma}
        + P^\rho_\nu \eta_\mu h^{\nu\sigma}) \nonumber\\
      &= \frac{1}{2} (P^\rho_\mu \eta^\sigma
        + \eta_\mu h^{\rho\sigma}) \; ,
    \end{align}
    \end{subequations}
    where $\eta^\mu := h^{\mu\nu} \eta_\nu$ is the spacelike vector
    field corresponding to $\eta$.  Now using
    \eqref{eq:Gal_Weyl_pf_diff_h}, the compatibility equation
    \eqref{eq:Gal_Weyl_pf_compat_h_1} takes the form
    \begin{equation}
      \label{eq:Gal_Weyl_pf_compat_h_2}
      -\varphi_\rho h^{\mu\nu}
      = -\tilde{\varphi}_\rho h^{\mu\nu}
        + P^{(\mu}_\rho \eta^{\nu)\vphantom{\mu}}_{\vphantom{\rho}}
        + \eta_\rho h^{\mu\nu} .
    \end{equation}
    Contracting this with $v^\rho$ gives
    \begin{subequations}
    \begin{align}
      -\varphi(v) h^{\mu\nu}
      &= -\tilde{\varphi}(v) h^{\mu\nu} + \eta(v) h^{\mu\nu} , \\
      \shortintertext{i.e.}
      \label{eq:Gal_Weyl_pf_2}
      -\varphi(v)
      &= -\tilde{\varphi}(v) + \eta(v).
    \end{align}
    \end{subequations}
    Adding $2 \cdot \eqref{eq:Gal_Weyl_pf_1}$ to
    \eqref{eq:Gal_Weyl_pf_2}, we obtain
    \begin{equation}
      \label{eq:Gal_Weyl_pf_eta_v_0}
      0 = -\eta(v).
    \end{equation}

    \enlargethispage{2\baselineskip}

    Contracting \eqref{eq:Gal_Weyl_pf_compat_h_2} with
    $\gamma_{\mu\nu}$---the components of the covariant space
    metric\footnote{In the literature on Galilei geometry /
      Newton--Cartan gravity, the components of the covariant space
      metric are commonly denoted by $h_{\mu\nu}$.  We choose here the
      symbol $\gamma$ instead, in parallel to Carroll geometry; cf.\
      \textcite{Vigneron.Barzegar.Read:2025}.}  with respect to $v$,
    defined by the requirements $\gamma_{\mu\nu} = \gamma_{\nu\mu}$,
    $\gamma_{\mu\nu} v^\nu = 0$, and $\gamma_{\mu\nu} h^{\nu\rho} =
    P^\rho_\mu$---yields
    \begin{subequations}
    \begin{align}
      -n \varphi_\rho
      &= -n \tilde{\varphi}_\rho + P^\mu_\rho \eta_\mu
        + n \eta_\rho \; , \\
      \intertext{where $n = \dim M - 1$.  Contracting this with
      $h^{\rho\nu}$, we obtain}
      \label{eq:Gal_Weyl_pf_3}
      -n \varphi^\nu
      &= -n \tilde{\varphi}^\nu + (n+1) \eta^\nu .
    \end{align}
    \end{subequations}
    On the other hand, contracting
    \eqref{eq:Gal_Weyl_pf_compat_h_2} with $\delta^\rho_\mu$ yields
    \begin{equation}
      \label{eq:Gal_Weyl_pf_4}
      -\varphi^\nu
      = -\tilde{\varphi}^\nu
        + \delta^\rho_\mu \frac{1}{2} (P^\mu_\rho \eta^\nu
          + P^\nu_\rho \eta^\mu)
        + \eta^\nu
      = -\tilde{\varphi}^\nu + \frac{n+1}{2} \eta^\nu + \eta^\nu
      = -\tilde{\varphi}^\nu + \frac{n+3}{2} \eta^\nu .
    \end{equation}
    Subtracting \eqref{eq:Gal_Weyl_pf_3} from $n \cdot
    \eqref{eq:Gal_Weyl_pf_4}$, we have
    \begin{equation}
      0
      = \frac{n(n+3)}{2} \eta^\nu - (n+1) \eta^\nu
      = \frac{n^2 + n - 2}{2} \eta^\nu .
    \end{equation}
    The expression $n^2 + n - 2$ vanishes if and only if $n \in \{-2,
    1\}$.  Thus, since we assumed $n > 1$, we obtain
    \begin{equation}
      \eta^\nu = 0.
    \end{equation}
    Since we can express $\eta$ as $\eta_\mu = (v^\rho \tau_\mu +
    P^\rho_\mu) \eta_\rho = \eta(v) \tau_\mu + \gamma_{\mu\nu}
    h^{\nu\rho} \eta_\rho = \eta(v) \tau_\mu + \gamma_{\mu\nu}
    \eta^\nu$, together with \eqref{eq:Gal_Weyl_pf_eta_v_0} this shows
    $\eta = 0$.  Hence the symmetrised part of the difference tensor
    vanishes,
    \begin{equation}
      D^\rho_{(\mu\nu)} = 0,
    \end{equation}
    and due to \eqref{eq:Gal_Weyl_pf_compat_h_2} we have
    \begin{equation}
      \varphi = \tilde{\varphi}.
    \end{equation}

    The antisymmetric part of the difference $D$ is given by (one half
    of) the difference of the torsions of the two connections,
    \begin{subequations}
    \begin{align}
      D^\rho_{[\mu\nu]}
      &= \frac{1}{2} (\tensor{T}{^\rho_{\mu\nu}}
        - \tensor{\tilde{T}}{^\rho_{\mu\nu}}\!). \\
      \intertext{Due to the assumption of agreeing spatial torsion
      with respect to $[v]$, this is given by}
      \label{eq:Gal_Weyl_pf_D_antisymm}
      D^\rho_{[\mu\nu]}
      &= \frac{1}{2} v^\rho \tau_\sigma
        (\tensor{T}{^\sigma_{\mu\nu}}
        - \tensor{\tilde{T}}{^\sigma_{\mu\nu}}\!),
    \end{align}
    \end{subequations}
    i.e.\ by the difference of the temporal torsions of the two
    connections.  However, according to
    \eqref{eq:Gal_conf_compat_torsion} the temporal torsions may be
    expressed in terms of $\D\tau$ and $\varphi, \tilde{\varphi}$.
    Thus, due to $\varphi = \tilde{\varphi}$, the temporal torsions of
    $\nabla$ and $\tilde{\nabla}$ agree.  This shows $D = 0$,
    finishing the proof.
  \end{proof}
\end{thm}

\begin{rem}
  Note that the proof of the vanishing of the spacelike part of $\eta$
  was directly analogous to the pseudo-Riemannian case.  In fact, if
  we had assumed $\tau \wedge \D\tau = 0$ (which is invariant under
  conformal rescaling of $\tau$), we could have appealed directly to
  the Riemannian version of Weyl's theorem for this conclusion: $\tau
  \wedge \D\tau = 0$ is equivalent to the spacelike distribution $\ker
  \tau$ being integrable by $n$-dimensional `spatial leaves' $\Sigma$,
  on each of which $[h]$ then induces a Riemannian conformal
  structure.  The connections $\nabla, \tilde{\nabla}$ then induce
  `spatial connections' on these leaves, both of which are compatible
  with the Riemannian conformal structure.  Further, the spatial
  connections' geodesics are precisely the spacelike geodesics of the
  spacetime connections, which agree up to parametrisation; i.e.\ on
  each spatial leaf the two spatial connections are projectively
  equivalent.  Thus, Weyl's theorem implies that the spatial
  connections agree.
\end{rem}

\subsection{A Galilei analogue of Weyl metrics}

We may define an analogue of Weyl metrics for Galilei geometry as
follows.

\begin{constr}
  Fixing a Galilei structure $(\tau, h)$ on $M$, with respect to a
  choice of unit timelike vector field $v \in \Gamma(TM)$, $\tau(v) =
  1$, any linear connection $\nabla$ on $M$ is uniquely determined
  by---and uniquely determines, vice versa---its $\tau$-non-metricity
  $\nabla \tau$, the spatial part $P^\rho_\mu P^\sigma_\nu
  \nabla_\kappa h^{\mu\nu}$ of its $h$-non-metricity, its spatial
  torsion $\tensor{(PT)}{^\sigma_{\mu\nu}} := P^\sigma_\rho
  \tensor{T}{^\rho_{\mu\nu}}$, and its Newton--Coriolis form
  $\Omega_{\mu\nu} = 2 (\nabla_{[\mu} v^\rho) \gamma_{\nu]\rho}$; see
  \textcite{Schwartz:2025}.  Assuming the non-metricities to be of the
  form $\nabla \tau = \frac{1}{2} \varphi \otimes \tau$, $\nabla h =
  -\varphi \otimes h$, such that $\nabla$ is compatible with the
  conformal Galilei structure $[\tau, h]$, we already know that fixing
  the connection $\nabla$, under a conformal rescaling $(\tau,h)
  \mapsto (\e^{\frac{1}{2}\lambda} \tau, \e^{-\lambda} h)$ with
  $\lambda \in C^\infty(M)$ the one-form $\varphi$ transforms as
  $\varphi \mapsto \varphi + \D\lambda$.  Further, fixing the timelike
  direction $[v]$, the spatial projector $P$ along it stays the same,
  such that the spatial torsion $PT$ stays invariant.  Rescaling $v
  \mapsto \e^{-\frac{1}{2}\lambda}$ such that it remains unit timelike
  with respect to the new representative of the conformal structure, a
  direct computation shows that the Newton--Coriolis form of $\nabla$
  with respect to $v$ transforms as $\Omega \mapsto
  \e^{-\frac{1}{2}\lambda} \Omega$.

  Therefore, a conformal Galilei structure $[\tau, h]$ on $M$ together
  with a compatible linear connection is equivalently characterised by
  an equivalence class $[\tau, h, \varphi, v, PT, \Omega]$ of Galilei
  structures $(\tau, h)$, one-forms $\varphi$, unit timelike vector
  fields $v$ with respect to $\tau$, tensor fields $PT \in
  \Gamma(\ker\tau \otimes \bigwedge^2 T^*M)$, and two-forms $\Omega$
  on $M$ under the equivalence relation
  \begin{equation}
    (\tau, h, \varphi, v, PT, \Omega)
    \sim (\e^{\frac{1}{2}\lambda} \tau, \e^{-\lambda} h,
      \varphi + \D\lambda, \e^{-\frac{1}{2}\lambda} v, PT,
      \e^{-\frac{1}{2}\lambda} \Omega)
  \end{equation}
  for $\lambda \in C^\infty(M)$.

  Alternatively, since $v$ is uniquely determined by the timelike
  direction $[v]$ once $\tau$ is fixed, and $PT$ is invariant, these
  are the same data as a triple $([\tau, h, \varphi, \Omega], [v],
  PT)$ consisting of (i)~an equivalence class of Galilei structures,
  one-forms, and two-forms on $M$ under
  \begin{equation}
    (\tau, h, \varphi, \Omega)
    \sim (\e^{\frac{1}{2}\lambda} \tau, \e^{-\lambda} h,
      \varphi + \D\lambda, \e^{-\frac{1}{2}\lambda} \Omega);
  \end{equation}
  (ii)~a timelike direction $[v]$ with respect to the conformal
  Galilei structure $[\tau, h]$; and (iii)~a tensor field $PT \in
  \Gamma(\ker\tau \otimes \bigwedge^2 T^*M)$.

  Thus, we may reformulate our Weyl-type theorem for Galilei geometry
  (\cref{thm:Gal_Weyl}) as the statement that \emph{such a `Galilei
    Weyl metric' $([\tau, h, \varphi, \Omega], [v], PT)$ is uniquely
    determined by its conformal and projective structures together
    with its timelike direction $[v]$ and its spatial torsion $PT$
    with respect to $[v]$}.

  Note that differently to the pseudo-Riemannian case, for a `Galilei
  Weyl metric' we always include the freely specifiable torsion $PT$
  as part of the data, since its vanishing is not a meaningful
  invariant statement: the same connection may have $PT = 0$ with
  respect to some choice $[v]$ of timelike direction, but $\tilde{P}T
  \ne 0$ with respect to some other direction $[\tilde{v}]$.\looseness-1
\end{constr}

\begin{rem}
  The above formulation of the notion of a `Galilei Weyl metric'
  corrects and generalises the notion of an `\acronym{NR}--Weyl
  manifold' presented by \textcite[p.\ 170]
  {Adlam.Linnemann.Read:2025}, in which the data $v, PT, \Omega$ are
  not included.
\end{rem}

\subsection{Necessity of the dimensionality assumption}

We are now going to show that the `additional' assumption in
\cref{thm:Gal_Weyl} as compared to the torsionful version of Weyl's
theorem in the pseudo-Riemannian case, namely that $\dim M > 2$, is
necessary for the conclusion of agreeing connections: we are going to
show that \cref{thm:Gal_Weyl} fails to apply if $\dim M = 2$, and in
fact precisely classify \emph{how} it fails.

\begin{prop}
  \label{prop:Gal_diff_dim_2}
  Let $M$ be a differentiable manifold with $\dim M = 2$.  Let $[\tau,
  h]$ be a conformal Galilei structure on $M$, and $[v]$ a timelike
  direction with respect to $[\tau, h]$.

  \begin{enumerate}[label=(\roman*)]
  \item Let $\nabla, \tilde{\nabla}$ be two linear connections whose
    spatial torsions with respect to $[v]$ agree, i.e.\ their torsions
    satisfy $P(T(\cdot,\cdot)) = P(\tilde{T}(\cdot, \cdot))$ where $P$
    is the spatial projector along $[v]$.

    If the connections are compatible with $[\tau, h]$ and
    projectively equivalent, then their difference tensor $D := \nabla
    - \tilde{\nabla}$ is given by
    \begin{subequations}
    \begin{equation}
      \label{eq:Gal_diff_dim_2}
      D^\rho_{\mu\nu}
      = \eta_\mu \delta^\rho_\nu
        - \eta^{\vphantom{\rho}}_{[\mu} P^\rho_{\nu]}
    \end{equation}
    for a one-form $\eta \in \Omega^1(M)$ that is spacelike with
    respect to $[v]$, i.e.\ that satisfies
    \begin{equation}
      \eta(v) = 0.
    \end{equation}
    \end{subequations}
  \item Conversely, let $\nabla$ be a linear connection that is
    compatible with $[\tau, h]$.  Then for \emph{any} choice of $\eta$
    with $\eta(v) = 0$, the connection $\tilde{\nabla} := \nabla - D$
    with difference $D$ given by \eqref{eq:Gal_diff_dim_2} is
    compatible with $[\tau, h]$ and has the same spatial torsion with
    respect to $[v]$ as $\nabla$, and the connections are projectively
    equivalent.
  \end{enumerate}

  \begin{proof}
    Fix some representative $(\tau,h)$ of the conformal structure, and
    let $v$ be the corresponding representative of the timelike
    direction satisfying $\tau(v) = 1$.

    \begin{enumerate}[label=(\roman*)]
    \item As in the proof of \cref{thm:Gal_Weyl}, we denote the
      one-form determining the symmetric part of the difference tensor
      $D = \nabla - \tilde{\nabla}$ by $\eta$, and the one-forms
      determining the non-metricities of $\nabla$ and $\tilde{\nabla}$
      by $\varphi$ and $\tilde{\varphi}$, respectively.

      In the proof of \cref{thm:Gal_Weyl}, we did not use any
      assumption on $\dim M$ in order to show that
      \begin{subequations}
      \begin{equation}
        \eta(v) = 0 \; \text{and} \; \varphi(v) = \tilde{\varphi}(v).
      \end{equation}
      Therefore, these still hold in the present case $\dim M = 2$.
      However, the spatial part $\eta^\mu = h^{\mu\nu} \eta_\nu$ of
      $\eta$ need not vanish in the case $n = \dim M - 1 = 1$: both
      \eqref{eq:Gal_Weyl_pf_3} and \eqref{eq:Gal_Weyl_pf_4} become
      \begin{equation}
        \varphi^\mu = \tilde{\varphi}^\mu - 2 \eta^\mu
      \end{equation}
      \end{subequations}
      in this case.  Together, these equations show that for $\dim M =
      2$ we have
      \begin{equation}
        \label{eq:Gal_phi_diff_dim_2}
        \varphi = \tilde{\varphi} - 2 \eta,
      \end{equation}
      with $\eta$ spacelike with respect to $[v]$.

      Further, using \eqref{eq:Gal_conf_compat_torsion} this implies
      that the difference of the temporal torsions of $\nabla$ and
      $\tilde{\nabla}$ is given by
      \begin{equation}
        \tau_\rho (\tensor{T}{^\rho_{\mu\nu}}
          - \tensor{\tilde{T}}{^\rho_{\mu\nu}}\!)
        = -(\varphi_{[\mu} - \tilde{\varphi}_{[\mu}) \tau_{\nu]}
        = 2 \eta_{[\mu} \tau_{\nu]} \; .
      \end{equation}
      Therefore, using \eqref{eq:Gal_Weyl_pf_D_antisymm}, the
      antisymmetric part of the difference tensor is
      $D^\rho_{[\mu\nu]} = v^\rho \eta_{[\mu} \tau_{\nu]}$.  Together
      with the equation determining the symmetric part of $D$ in terms
      of $\eta$, we thus obtain
      \begin{equation}
        D^\rho_{\mu\nu}
        = \delta^\rho_{(\mu} \eta^{\vphantom{\rho}}_{\nu)\vphantom{\mu}}
          + \eta_{[\mu} v^\rho \tau_{\nu]}
        = \delta^\rho_{(\mu} \eta^{\vphantom{\rho}}_{\nu)\vphantom{\mu}}
          + \eta^{\vphantom{\rho}}_{[\mu} (\delta^\rho_{\nu]}
            - P^\rho_{\nu]})
      = \eta_\mu \delta^\rho_\nu
        - \eta^{\vphantom{\rho}}_{[\mu} P^\rho_{\nu]}
      \end{equation}
      as stated.

    \item Again, we denote the one-form determining the
      non-metricities of $\nabla$ by $\varphi$, such that we have
      \begin{equation}
        \nabla \tau = \frac{1}{2} \varphi \otimes \tau,
        \nabla h = - \varphi \otimes h.
      \end{equation}
      Let $\eta \in \Omega^1(M)$ be an arbitrary one-form satisfying
      $\eta(v) = 0$, and define $\tilde{\nabla} := \nabla - D$ with
      $D$ given by \eqref{eq:Gal_diff_dim_2}.

      Directly from \eqref{eq:Gal_diff_dim_2}, we obtain
      \begin{equation}
        D^\rho_{(\mu\nu)}
        = \delta^\rho_{(\mu} \eta^{\vphantom{\rho}}_{\nu)} \; ,
      \end{equation}
      showing projective equivalence of $\nabla$ and $\tilde{\nabla}$.
      Further, the difference of their spatial torsions with respect
      to $[v]$ is
      \begin{equation}
        P^\sigma_\rho D^\rho_{[\mu\nu]}
        = P^\sigma_\rho \eta^{\vphantom{\rho}}_{[\mu}
          (\delta^\rho_{\nu]} - P^\rho_{\nu]})
        = P^\sigma_\rho \eta_{[\mu} v^\rho \tau_{\nu]}
        = 0,
      \end{equation}
      i.e.\ the spatial torsions agree.

      Using \eqref{eq:Gal_diff_dim_2}, we may compute
      \begin{subequations}
      \begin{align}
        \tilde{\nabla}_\mu \tau_\nu
        &= \nabla_\mu \tau_\nu + D^\rho_{\mu\nu} \tau_\rho \nonumber\\
        &= \frac{1}{2} \varphi_\mu \tau_\nu
          + (\eta_\mu \delta^\rho_\nu
            - \eta^{\vphantom{\rho}}_{[\mu} P^\rho_{\nu]}) \tau_\rho
          \nonumber\\
        &= \frac{1}{2} (\varphi_\mu + 2 \eta_\mu) \tau_\nu \\
        \shortintertext{and}
        \tilde{\nabla}_\mu h^{\rho\sigma}
        &= \nabla_\mu h^{\rho\sigma} - D^\rho_{\mu\nu} h^{\nu\sigma}
          - D^\sigma_{\mu\nu} h^{\rho\nu} \nonumber\\
        &= - \varphi_\mu h^{\rho\sigma} - (\eta_\mu \delta^\rho_\nu
              - \eta^{\vphantom{\rho}}_{[\mu} P^\rho_{\nu]})
            h^{\nu\sigma}
          - (\eta_\mu \delta^\sigma_\nu
              - \eta^{\vphantom{\sigma}}_{[\mu} P^\sigma_{\nu]})
            h^{\rho\nu} \nonumber\\
        &= - \varphi_\mu h^{\rho\sigma}
          - \eta^{\vphantom{\rho}}_{(\mu} P^\rho_{\nu)} h^{\nu\sigma}
          - \eta^{\vphantom{\sigma}}_{(\mu} P^\sigma_{\nu)}
            h^{\rho\nu} \nonumber\\
        &= - \varphi_\mu h^{\rho\sigma}
          - \frac{1}{2} (\eta_\mu h^{\rho\sigma}
            + \eta^\sigma P^\rho_\mu)
          - \frac{1}{2} (\eta_\mu h^{\rho\sigma}
            + \eta^\rho P^\sigma_\mu) \nonumber\\
        &= -(\varphi_\mu + \eta_\mu) h^{\rho\sigma}
          - P^{(\rho}_\mu \eta^{\sigma)\vphantom{\rho}}_{\vphantom{\mu}}
          \; . \\
        \intertext{Due to $\dim M = 2$, i.e.\ there being only one
        spatial direction, and using that $\eta$ is spacelike, we have
        $P^{(\rho}_\mu \eta^{\sigma)\vphantom{\rho}}_{\vphantom{\mu}}
        = \eta_\mu h^{\rho\sigma}$.  Hence, the previous equation
        becomes}
        \tilde{\nabla}_\mu h^{\rho\sigma}
        &= -(\varphi_\mu + 2 \eta_\mu) h^{\rho\sigma} \; .
      \end{align}
      \end{subequations}
      This shows compatibility of $\tilde{\nabla}$ with $[\tau, h]$,
      as claimed.  \qedhere
    \end{enumerate}
  \end{proof}
\end{prop}

\begin{rem}
  In the situation of \cref{prop:Gal_diff_dim_2}, a direct computation
  shows that the two connections $\nabla$ and $\tilde{\nabla}$ with
  difference tensor \eqref{eq:Gal_diff_dim_2} have agreeing
  Newton--Coriolis forms $\Omega = \tilde{\Omega}$ with respect to a
  choice $v$ of representative of the timelike direction.  Thus, we
  have shown that \emph{on a 2-dimensional manifold with a conformal
    Galilei structure $[\tau, h]$, when fixing the spatial torsion
    with respect to $[v]$ and the projective structure of a connection
    compatible with $[\tau, h]$, the remaining freedom is precisely
    the spatial part $P^\mu_\nu \varphi_\mu$ of the one-form $\varphi$
    encoding the connection's non-metricities}.
\end{rem}

\section{The Carroll case}
\label{sec:Carroll}

Here we develop a parallel story for the case of Carroll geometry, on
which see \textcite{Duval.EtAl:2014, Bergshoeff.EtAl:2014,
  Bergshoeff.EtAl:2017, March.Read:2025}.

\subsection{Prerequisites}

\begin{defn}
  Let $M$ be a differentiable manifold with $\dim M > 1$.
  \begin{enumerate}[label=(\roman*)]
  \item A \emph{Carroll structure} on $M$ is a pair $(v, \gamma)$ of a
    nowhere-vanishing vector field $v \in \Gamma(TM)$ and a
    positive-semidefinite $\gamma \in \Gamma(\bigvee^2 T^*M)$ of rank
    $n = \dim M - 1$ such that $v$ spans the degenerate direction of
    $\gamma$, i.e.
    \begin{equation}
      v^\mu \gamma_{\mu\nu} = 0.
    \end{equation}
  \item A \emph{conformal Carroll structure} on $M$ is an equivalence
    class $[v, \gamma]$ of Carroll structures under the equivalence
    relation $( v, \gamma) \sim ( \e^{- \frac{1}{2} \lambda } v,
    \e^\lambda \gamma )$ for $\lambda \in C^\infty (M)$.
  \item A linear connection $\nabla$ on $M$ is \emph{compatible} with
    a conformal Carroll structure if for some (and then any)
    representative $(v, \gamma)$ of the conformal structure there is a
    one-form $\varphi \in \Omega^1 (M)$ such that $\nabla v = -
    \frac{1}{2} \varphi \otimes v$, $\nabla \gamma = \varphi \otimes
    \gamma$.
  \end{enumerate}
\end{defn}

\begin{rem}
  \begin{enumerate}[label=(\roman*)]
  \item As in the Galilei case, we chose a definition of conformal
    equivalence that uses related factors for $v$ and $\gamma$, as
    this would result from the Carroll limit of a Lorentzian conformal
    transformation.  This is also the reason for the definition of
    compatibility using related one-forms.  This definition of
    conformal Carroll structures is also used, e.g., by
    \textcite{Herfray:2022}.

    As in the Galilei case, one could seek to relax this assumption.
  \item As in the pseudo-Riemannian and Galilei cases, the one-forms
    encoding the non-metricity of a compatible connection with respect
    to different representatives conformally related by $\lambda$ are
    related by $\varphi \mapsto \varphi + \D\lambda$.
  \item As in the Galilei case, the non-metricities of a connection
    with respect to a Carroll structure need to satisfy two
    conditions, see \textcite{Vigneron.Barzegar.Read:2025}.  The one
    relating only the non-metricities, namely $v^\mu \nabla_\rho
    \gamma_{\mu\nu} = - \gamma_{\mu\nu} \nabla_\rho v^\mu$, is
    satisfied for the non-metricities from the definition of
    compatibility.  The second condition involves the torsion: it
    reads
    \begin{subequations} \label{eq:Car_torsion}
    \begin{align}
      T_{(\mu\nu)\rho} v^\rho
      &= \frac{1}{2} v^\rho \nabla_\rho \gamma_{\mu\nu}
        + \gamma_{\rho(\mu} \nabla_{\nu)} v^\rho
        - \frac{1}{2} (\mathcal L_v \gamma)_{\mu\nu} \; , \\
      \intertext{where the first index on the torsion has been lowered
      with $\gamma$.  For the above non-metricities, this becomes}
      \label{eq:Car_conf_compat_torsion}
      T_{(\mu\nu)\rho} v^\rho
      &= \frac{1}{2} \varphi(v) \gamma_{\mu\nu}
        - \frac{1}{2} (\mathcal L_v \gamma)_{\mu\nu} \; .
    \end{align}
    \end{subequations}  
    Therefore, as in the Galilei case, we have to allow for torsionful
    connections in order to formulate a Carrollian analogue of Weyl's
    theorem.
  \end{enumerate}
\end{rem}

In order to fix the free parts of the torsion, we again need a notion
of timelike direction, or more specifically a co-direction in this
case:

\begin{defn}
  Given a conformal Carroll structure $[v, \gamma]$ on a
  differentiable manifold $M$, the notion of a \emph{timelike
    co-direction} (i.e.\ a timelike one-form up to rescaling) with
  respect to $[v, \gamma]$ is well-defined: it is an equivalence class
  $[\tau]$ of timelike one-forms, $\tau(v) \ne 0$, under the
  equivalence relation $\tau \sim \tilde{\lambda} \tau$ for
  $\tilde{\lambda} \in C^\infty(M, \R \setminus \{0\})$.  Given a
  timelike co-direction $[\tau]$, there is a well-defined unique
  \emph{spatial projector along $[\tau]$}, the vector bundle
  endomorphism $P$ of $T^*M$ defined by projecting onto $\ker v$
  (which is invariant under rescaling of $v$) and having kernel
  spanned by $\tau$ (which is invariant under rescaling of $\tau$).

  Fixing a representative $(v, \gamma)$ of the conformal structure,
  there is a unique representative $\tau$ of the timelike co-direction
  satisfying $\tau(v) = 1$, and in terms of these the spatial
  projector takes the usual explicit form $P^\mu_\nu = \delta^\mu_\nu
  - v^\mu \tau_\nu$.
\end{defn}

\begin{constr}
  \label{constr:Car_torsion_free}
  Let $[v, \gamma]$ be a conformal Carroll structure on $M$, and
  $[\tau]$ a timelike co-direction with respect to it.  Further, let
  $P$ be the spatial projector along $[\tau]$.  Fix some
  representative $(v, \gamma)$ of the conformal structure.

  By \eqref{eq:Car_torsion}, the components of a connection's torsion
  that are determined by the non-metricities are those of the form
  $T_{(\mu\nu)\rho} v^\rho$: we lower the first index with $\gamma$,
  symmetrise the first two indices, and take the temporal component on
  the third index (with respect to $(v, \gamma)$).  All other
  components of the torsion not determined by these may be freely
  specified in the definition of a connection (see
  \cref{sec:app_class_conn_Carroll} for a detailed discussion of the
  classification of connections on Carroll manifolds, based on
  \textcite{Vigneron.Barzegar.Read:2025}).

  The first set of freely specifiable torsion components is given by
  \begin{subequations}
  \begin{equation}
    T_{[\mu\nu]\rho} v^\rho \; ,
  \end{equation}
  where we still take the temporal component on the third index, but
  lower the first index with $\gamma$ and then \emph{anti}symmetrise
  the first two indices.  Together with the components
  $T_{(\mu\nu)\rho} v^\rho$ determined by the non-metricities, these
  are all torsion components of the form $T_{\mu\nu\rho} v^\rho$,
  i.e.\ spatial on the first index and mixed spatial--temporal on the
  second and third index.  The missing freely specifiable torsion
  components are thus the second set
  \begin{equation}
    P^\rho_\mu P^\sigma_\nu T_{\kappa\rho\sigma}
  \end{equation}
  which is spatial on all three indices, and the third set
  \begin{equation}
    (\delta^\rho_\sigma - P^\rho_\sigma) \tensor{T}{^\sigma_{\mu\nu}}
  \end{equation}
  \end{subequations}
  which is temporally projected (with respect to $[\tau]$) on the
  first index.

  Combined, we have thus seen that the freely specifiable torsion
  components with respect to a choice of timelike co-direction
  $[\tau]$ are
  \begin{equation}
    \label{eq:Car_torsion_free}
    \left(T_{[\mu\nu]\rho} v^\rho \; , \;
    P^\rho_\mu P^\sigma_\nu T_{\kappa\rho\sigma} \; , \;
    (\delta^\rho_\sigma - P^\rho_\sigma) \tensor{T}{^\sigma_{\mu\nu}}
    \right).
  \end{equation}
  This field may be understood as a section of the bundle
  \begin{equation}
    \textstyle
    \left(\bigwedge^2 \ker v\right)
    \oplus
    \left(\ker v \otimes \bigwedge^2 \ker v\right)
    \oplus
    \left(\Span\{v\} \otimes \bigwedge^2 T^*M\right).
  \end{equation}
  Note that the form of the free torsion components as given in
  \eqref{eq:Car_torsion_free} depends on the choice of representative
  $(v, \gamma)$ of the conformal Carroll structure.  For a discussion
  of the free torsion in a conformally invariant way, see
  \cref{sec:app_free_torsion_conf_Carroll}.
\end{constr}

\subsection{The Carroll Weyl-type theorem}

With these prerequisites, we can now prove a Weyl-type theorem, which
is however significantly weaker than in the case of Galilei geometry:

\begin{thm}
  \label{thm:Car_Weyl}
  Let $M$ be a differentiable manifold with $\dim M > 1$.  Let $[v,
  \gamma]$ be a conformal Carroll structure on $M$, and $[\tau]$ a
  timelike co-direction with respect to $[v, \gamma]$.

  \begin{enumerate}[label=(\roman*)]
  \item \label{thm:Car_Weyl_1} Let $\nabla, \tilde{\nabla}$ be two
    linear connections whose free torsion components with respect to
    $[\tau]$ agree, i.e.\ their torsions satisfy $T_{[\mu\nu]\rho}
    v^\rho = \tilde{T}_{[\mu\nu]\rho} v^\rho$, $P^\rho_\mu
    P^\sigma_\nu T_{\kappa\rho\sigma} = P^\rho_\mu P^\sigma_\nu
    \tilde{T}_{\kappa\rho\sigma}$, and $(\delta^\rho_\sigma -
    P^\rho_\sigma) \tensor{T}{^\sigma_{\mu\nu}} = (\delta^\rho_\sigma
    - P^\rho_\sigma) \tensor{\tilde{T}}{^\sigma_{\mu\nu}}$, where $P$
    is the spatial projector along $[\tau]$.\footnote{Note that the
      condition of agreeing free torsion is invariant under change of
      representative $(v, \gamma)$ of the conformal structure.}  If
    the connections are compatible with $[v, \gamma]$ and projectively
    equivalent, then their difference tensor $D := \nabla -
    \tilde{\nabla}$ is given by
    \begin{equation}
      \label{eq:Car_Weyl}
      D^\rho_{\mu\nu} = f \tau_\mu \delta^\rho_\nu
    \end{equation}
    for a function $f \in C^\infty(M)$ and a representative $\tau$ of
    the timelike co-direction.
  \item \label{thm:Car_Weyl_2} Conversely, let $\nabla$ be a linear
    connection that is compatible with $[v, \gamma]$.  Then for
    \emph{any} choice of $f$ and $\tau$, the connection
    $\tilde{\nabla} := \nabla - D$ with difference $D$ given by
    \eqref{eq:Car_Weyl} is compatible with $[v, \gamma]$ and has the
    same free torsion components with respect to $[\tau]$ as $\nabla$,
    and the connections are projectively equivalent.
  \end{enumerate}
\end{thm}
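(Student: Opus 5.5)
The plan is to fix a representative $(v,\gamma)$ of the conformal structure and the representative $\tau$ of $[\tau]$ with $\tau(v)=1$. I will then decompose the difference tensor $D := \nabla - \tilde{\nabla}$ into components that are temporal or spatial in each index, using $P^\mu_\nu = \delta^\mu_\nu - v^\mu\tau_\nu$. For the upper index I split $D$ into $\tau_\rho D^\rho_{\mu\nu}$ and the lowered part $D_{\kappa\mu\nu} := \gamma_{\kappa\lambda} D^\lambda_{\mu\nu}$; the two lower indices are split with $v$ and $P$.

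The available equations, in the order I will use them, are as follows.
\begin{enumerate}[label=(\alph*)]
\item Projective equivalence gives $D^\rho_{(\mu\nu)} = \delta^\rho_{(\mu}\eta_{\nu)}$.
\item Compatibility with $v$ gives $D^\rho_{\mu\nu}v^\nu = \psi_\mu v^\rho$, with $\psi := -\tfrac12(\varphi-\tilde\varphi)$. In particular $D_{\kappa\mu\nu}v^\nu = 0$.
\item Compatibility with $\gamma$ gives $D_{\nu\rho\mu} + D_{\mu\rho\nu} = -(\varphi-\tilde\varphi)_\rho\gamma_{\mu\nu}$.
\item Agreement of the free torsion components gives, with $T-\tilde T = 2D^\rho_{[\mu\nu]}$, the three conditions $\tau_\sigma D^\sigma_{[\mu\nu]} = 0$, $D_{[\kappa|\mu|\nu]}v^\mu$-type antisymmetric vanishing, and $P^\rho_\mu P^\sigma_\nu D_{\kappa[\rho\sigma]} = 0$.
\end{enumerate}

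First I treat the temporal components. Contracting (a) with $\tau_\rho v^\mu$ and combining with (b) and with $\tau_\sigma D^\sigma_{[\mu\nu]} = 0$ should give $\psi_\mu = \tfrac12(\eta_\mu + \eta(v)\tau_\mu)$. Lowering (a) with $\gamma$ and contracting with $v$, using $D_{\kappa\mu\nu}v^\nu = 0$, gives $D_{\kappa\nu\mu}v^\nu = \gamma_{\kappa\mu}\eta(v)$. Next I contract (c) with $v^\rho$ and compare with the torsion identity \eqref{eq:Car_conf_compat_torsion} for both connections. This should be consistent and fix $(\varphi-\tilde\varphi)(v) = -2\eta(v)$.

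Second, I treat the purely spatial block $S_{\kappa\mu\nu} := D_{\kappa\alpha\beta}P^\alpha_\mu P^\beta_\nu$. It is symmetric in $\mu\nu$ by the spatial-torsion condition, so (a) gives $S_{\kappa\mu\nu} = \gamma_{\kappa(\mu}\eta_{\nu)}$ up to spatial projection. Substituting this into the spatially projected (c) reproduces exactly the computation of \cref{thm:pseudo-Riemann_Weyl}, now in $n=\dim M-1$ dimensions with $\gamma$ and a spatial inverse $h$ relative to $\tau$. The two traces then yield $\tfrac{n^2+n-2}{2}\,P\eta = 0$. In addition, $\tau_\rho D^\rho_{\mu\nu}$ on spatial $\mu,\nu$ is symmetric with symmetric part $\tau_{(\mu}\eta_{\nu)}$ projected, hence vanishes.

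The goal is to show that $P\eta$, the spatial part of $\eta$, vanishes, so that $\eta = f\tau$ with $f := \eta(v)$. Once this holds, assembling the pieces gives the following. The lowered part is $D_{\kappa\mu\nu} = f\tau_\mu\gamma_{\kappa\nu}$. The temporal part is $\tau_\rho D^\rho_{\mu\nu} = f\tau_\mu\tau_\nu$, since its mixed components are $\psi = f\tau$ and its spatial components vanish. Together these are exactly $D^\rho_{\mu\nu} = f\tau_\mu\delta^\rho_\nu$.

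The main obstacle is killing $P\eta$ when $n=1$, that is, $\dim M = 2$. There the Weyl-type trace argument degenerates, precisely as in \cref{prop:Gal_diff_dim_2}, and something specifically Carrollian must be used. My first attempt will be to exploit the mixed components that have no Galilei analogue. These are $D_{\kappa\mu\nu}P^\mu v^\nu = 0$, which comes from the $v$-compatibility being an exact statement rather than a statement about $\tau$. I will also use the spatial part of $\varphi - \tilde\varphi$, which (c) with spatial $\rho$ ties to $P\eta$, together with (b): the spatial part of $\psi$ must equal $\tfrac12 P\eta$ and simultaneously $-\tfrac12 P(\varphi-\tilde\varphi)$. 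I expect that comparing this with the trace of (c) gives a second independent relation $P(\varphi - \tilde\varphi) = c\,P\eta$ with a coefficient different from the purely spatial one, forcing $P\eta = 0$ for every $n\ge1$. If this fails, the $n=1$ case will need a separate inspection.

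For part~\ref{thm:Car_Weyl_2}, I verify the statement directly for $D = f\tau_\mu\delta^\rho_\nu$.
\begin{itemize}
\item Projective equivalence: $D^\rho_{(\mu\nu)} = \delta^\rho_{(\mu}(f\tau)_{\nu)}$.
\item Compatibility with $v$: $\tilde\nabla_\mu v^\rho = \nabla_\mu v^\rho - f\tau_\mu v^\rho$, so $\tilde\varphi = \varphi + 2f\tau$.
\item Compatibility with $\gamma$: $\tilde\nabla_\mu\gamma_{\alpha\beta} = \nabla_\mu\gamma_{\alpha\beta} + 2f\tau_\mu\gamma_{\alpha\beta}$, again with $\tilde\varphi = \varphi + 2f\tau$.
\item Torsion: $2D^\rho_{[\mu\nu]} = f(\tau_\mu\delta^\rho_\nu - \tau_\nu\delta^\rho_\mu)$. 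Its $\tau$-projection vanishes, and its fully spatial part vanishes. Its lowered part contracted with $v$ is $-f\gamma_{\kappa\mu}$, which is symmetric, so its antisymmetric part vanishes. Hence all free torsion components agree.
\end{itemize}
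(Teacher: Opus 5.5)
Your proposal is correct. The step you leave as an expectation does close for every $n\ge1$, and it is precisely the central step of the paper's proof. Raising indices with $h$, your relations $\psi=\tfrac12(\eta+\eta(v)\tau)$ and $\psi=-\tfrac12(\varphi-\tilde\varphi)$ give $(\varphi-\tilde\varphi)^\sigma=-\eta^\sigma$. The $h^{\mu\nu}$-trace of the spatially projected $\gamma$-compatibility equation gives $n(\varphi-\tilde\varphi)^\sigma=-(n+1)\eta^\sigma$. Together these force $\eta^\sigma=0$.

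The paper uses only this combination and never runs the two-trace $\tfrac{n^2+n-2}{2}$ argument. You can therefore drop your Weyl-type detour and the case distinction at $n=1$ altogether. This is also why the Carroll theorem needs only $\dim M>1$, in contrast to the Galilei one.

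Where you genuinely differ from the paper is the final assembly. The paper writes $D^\rho_{\mu\nu}=\delta^\rho_{(\mu}\eta_{\nu)}+\tfrac12(\Delta T)^\rho{}_{\mu\nu}$. It then splits the torsion into free and constrained components and evaluates the constrained part $(\Delta T)_{(\sigma\mu)\kappa}v^\kappa$ via the torsion identity for compatible connections. You instead fix every block of $D$ directly:
\begin{itemize}
\item $D_{\kappa\mu\nu}v^\nu=0$;
\item $D_{\kappa\nu\mu}v^\nu=\eta(v)\gamma_{\kappa\mu}$;
\item the purely spatial block vanishes;
\item $\tau_\rho D^\rho_{\mu\nu}=\tau_{(\mu}\eta_{\nu)}$.
\end{itemize}
You then reassemble via $D^\rho_{\mu\nu}=v^\rho\tau_\lambda D^\lambda_{\mu\nu}+h^{\rho\kappa}D_{\kappa\mu\nu}$.

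Your route avoids both the torsion decomposition formula and the torsion identity. Your planned consistency check against that identity is unnecessary, since $(\varphi-\tilde\varphi)(v)=-2\psi(v)=-2\eta(v)$ directly. As a by-product, your route shows that the hypothesis on $T_{[\mu\nu]\rho}v^\rho$ is never needed. Projective equivalence and $v$-compatibility alone give $\gamma_{\kappa\lambda}(T-\tilde T)^\lambda{}_{\mu\nu}v^\nu=-\eta(v)\gamma_{\kappa\mu}$, which is symmetric. The paper's route, for its part, keeps the free/constrained torsion split explicit, in line with its classification of Carroll connections.

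Your verification of part (ii) is the same as the paper's.
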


\begin{rem}
  Note that even though the assumptions of part \ref{thm:Car_Weyl_1}
  of this theorem are the direct analogues of those in the Galilei
  case (\cref{thm:Gal_Weyl}), we cannot infer complete equality of the
  connections: we cannot prove covariant derivatives in temporal
  direction to agree.  Part \ref{thm:Car_Weyl_2} shows that this
  conclusion cannot be improved: any difference tensor of the form
  inferred in part \ref{thm:Car_Weyl_1} can indeed be realised.  Note
  also that this bears some similarity to the failure of the Galilean
  theorem for $\dim M = 2$ (\cref{prop:Gal_diff_dim_2}).
\end{rem}

\begin{proof}[Proof of \cref{thm:Car_Weyl}]
  Fix some representative $(v, \gamma)$ of the conformal structure,
  and let $\tau$ be the corresponding representative of the timelike
  co-direction satisfying $\tau(v) = 1$.

  \begin{enumerate}[label=(\roman*)]
  \item By projective equivalence, there is a one-form $\eta$ such
    that the difference tensor satisfies
    \begin{equation}
      \label{eq:Car_Weyl_pf_proj_eq}
      D^\rho_{(\mu\nu)}
      = \delta^\rho_{(\mu} \eta^{\vphantom{\rho}}_{\nu)} \; .
    \end{equation}
    Furthermore, both connections having the same free torsion
    components with respect to $[\tau]$ implies in particular
    \begin{subequations}
    \begin{align}
      P^\rho_\mu P^\sigma_\nu \gamma_{\kappa\lambda}
        D^\lambda_{[\rho\sigma]}
      &= 0, \\
      \shortintertext{implying}
      \label{eq:Car_Weyl_pf_diff_proj}
      P^\rho_\mu P^\sigma_\nu \gamma_{\kappa\lambda}
        D^\lambda_{\rho\sigma}
      &= P^\rho_\mu P^\sigma_\nu \gamma_{\kappa\lambda}
        D^\lambda_{(\rho\sigma)}
        = P^\rho_\mu P^\sigma_\nu \gamma_{\kappa(\rho} \eta_{\sigma)}
        \; .
    \end{align}
    \end{subequations}
    It also implies
    \begin{subequations}
    \begin{align}
      \tau_\rho D^\rho_{[\mu\nu]}
      &= 0, \\
      \shortintertext{implying}
      \label{eq:Car_Weyl_pf_diff_tau}
      \tau_\rho D^\rho_{\mu\nu}
      &= \tau_\rho D^\rho_{(\mu\nu)}
        = \tau_{(\mu} \eta_{\nu)} \; .
    \end{align}
    \end{subequations}

    By compatibility of the connections with $[v, \gamma]$, there are
    one-forms $\varphi, \tilde{\varphi}$ such that
    \begin{equation}
      \nabla v = -\frac{1}{2} \varphi \otimes v,
      \tilde{\nabla} v = -\frac{1}{2} \tilde{\varphi} \otimes v,
      \nabla \gamma = \varphi \otimes \gamma,
      \tilde{\nabla} \gamma = - \tilde{\varphi} \otimes \gamma.
    \end{equation}

    Combining the two equations for compatibility with $v$, we obtain
    \begin{align}
      \label{eq:Car_Weyl_pf_compat_v}
      -\frac{1}{2} \varphi_\mu v^\rho
      &= \nabla_\mu v^\rho \nonumber\\
      &= \tilde{\nabla}_\mu v^\rho + D^\rho_{\mu\nu} v^\nu \nonumber\\
      &= -\frac{1}{2} \tilde{\varphi}_\mu v^\rho
        + D^\rho_{\mu\nu} v^\nu .
    \end{align}
    From this, we will now derive three equations.  First, contracting
    with $\gamma_{\rho\sigma}$, we obtain
    \begin{equation}
      \label{eq:Car_Weyl_pf_diff_gamma_v}
      \gamma_{\rho\sigma} D^\rho_{\mu\nu} v^\nu = 0.
    \end{equation}

    Second, contracting \eqref{eq:Car_Weyl_pf_compat_v} with $v^\mu$,
    we obtain
    \begin{align}
      -\frac{1}{2} \varphi(v) v^\rho
      &= -\frac{1}{2} \tilde{\varphi}(v) v^\rho
        + D^\rho_{(\mu\nu)} v^\mu v^\nu \nonumber\\
      &= -\frac{1}{2} \tilde{\varphi}(v) v^\rho + \eta(v) v^\rho , \\
      \intertext{where we have used \eqref{eq:Car_Weyl_pf_proj_eq}.
      This directly implies}
      \label{eq:Car_Weyl_pf_1}
      -\frac{1}{2} \varphi(v)
      &= -\frac{1}{2} \tilde{\varphi}(v) + \eta(v).
    \end{align}

    Third, contracting \eqref{eq:Car_Weyl_pf_compat_v} with $2
    \tau_\rho$ yields
    \begin{equation}
      -\varphi_\mu
      = -\tilde{\varphi}_\mu + 2 \tau_{(\mu} \eta_{\nu)} v^\nu
      = -\tilde{\varphi}_\mu + \eta(v) \tau_\mu + \eta_\mu \; ,
    \end{equation}
    where we used \eqref{eq:Car_Weyl_pf_diff_tau}.  Contracting this
    with $h^{\mu\nu}$---the components of the contravariant space
    metric with respect to $\tau$, defined by the requirements
    $h^{\mu\nu} = h^{\nu\mu}$, $h^{\mu\nu} \tau_\nu = 0$, and
    $h^{\mu\nu} \gamma_{\nu\rho} = P^\mu_\rho$---yields
    \begin{equation}
      \label{eq:Car_Weyl_pf_2}
      -\varphi^\nu = -\tilde{\varphi}^\nu + \eta^\nu ,
    \end{equation}
    where $\varphi^\nu := h^{\mu\nu} \varphi_\mu$ is the
    $\tau$-spacelike vector field corresponding to $\varphi$, and
    analogously for $\tilde{\varphi}$ and $\eta$.

    On the other hand, combining the two equations for compatibility
    with $\gamma$, we obtain
    \begin{align}
      \label{eq:Car_Weyl_pf_compat_gamma_1}
      \varphi_\rho \gamma_{\mu\nu}
      &= \nabla_\rho \gamma_{\mu\nu} \nonumber\\
      &= \tilde{\nabla}_\rho \gamma_{\mu\nu}
        - D^\kappa_{\rho\mu} \gamma_{\kappa\nu}
        - D^\kappa_{\rho\nu} \gamma_{\mu\kappa}
        \nonumber\\
      &= \tilde{\varphi}_\rho \gamma_{\mu\nu}
        - 2 D^\kappa_{\rho(\mu}
          \gamma^{\vphantom{\kappa}}_{\nu)\kappa} \; .
    \end{align}
    Contracting this with $h^{\rho\sigma}$ yields
    \begin{equation}
      \label{eq:Car_Weyl_pf_compat_gamma_2}
      \varphi^\sigma \gamma_{\mu\nu}
      = \tilde{\varphi}^\sigma \gamma_{\mu\nu}
        - 2 h^{\rho\sigma} D^\kappa_{\rho(\mu}
          \gamma^{\vphantom{\kappa}}_{\nu)\kappa} \; .
    \end{equation}
    In order to simplify \eqref{eq:Car_Weyl_pf_compat_gamma_2}, we
    consider its last term.  From \eqref{eq:Car_Weyl_pf_diff_gamma_v},
    we know that the expression $\gamma_{\rho\sigma} D^\rho_{\mu\nu}$
    vanishes when contracted with $v^\nu$; i.e.\ the index $\nu$ is
    spacelike.  Therefore, the expression is equal to its projection
    onto space along $[\tau]$ on this index,
    \begin{subequations}
    \begin{align}
      \gamma_{\rho\sigma} D^\rho_{\mu\nu}
      &= \gamma_{\rho\sigma} P^\lambda_\nu D^\rho_{\mu\lambda} \; . \\
      \intertext{Contracting with $h^{\mu\kappa}$, we have}
      \gamma_{\rho\sigma} h^{\mu\kappa} D^\rho_{\mu\nu}
      &= \gamma_{\rho\sigma} h^{\mu\kappa} P^\lambda_\nu
        D^\rho_{\mu\lambda} \nonumber\\
      &= h^{\kappa\alpha} \gamma_{\rho\sigma} P^\mu_\alpha
        P^\lambda_\nu D^\rho_{\mu\lambda} \; . \\
      \intertext{But the totally spacelike projected difference tensor can be
      expressed in terms of $\eta$ according to
      \eqref{eq:Car_Weyl_pf_diff_proj},
      giving}
      \label{eq:Car_Weyl_pf_diff_spatial}
      \gamma_{\rho\sigma} h^{\mu\kappa} D^\rho_{\mu\nu}
      &= h^{\kappa\alpha} P^\mu_\alpha P^\lambda_\nu
        \gamma_{\sigma(\mu} \eta_{\lambda)} \nonumber\\
      &= h^{\kappa\mu} P^\lambda_\nu \gamma_{\sigma(\mu}
        \eta_{\lambda)} \nonumber\\
      &= \frac{1}{2} h^{\kappa\mu} P^\lambda_\nu
        (\gamma_{\sigma\mu} \eta_\lambda
          + \gamma_{\sigma\lambda} \eta_\mu) \nonumber\\
      &= \frac{1}{2}
        (P^\kappa_\sigma P^\lambda_\nu \eta_\lambda
          + \gamma_{\sigma\nu} \eta^\kappa).
    \end{align}
    \end{subequations}
    Now using \eqref{eq:Car_Weyl_pf_diff_spatial}, the compatibility
    equation \eqref{eq:Car_Weyl_pf_compat_gamma_2} takes the form
    \begin{equation}
      \label{eq:Car_Weyl_pf_compat_gamma_3}
      \varphi^\sigma \gamma_{\mu\nu}
      = \tilde{\varphi}^\sigma \gamma_{\mu\nu}
        - P^\sigma_{(\nu} P^\lambda_{\mu)} \eta_\lambda
        - \gamma_{\mu\nu} \eta^\sigma .
    \end{equation}
    Contracting this with $h^{\mu\nu}$ yields
    \begin{equation}
      \label{eq:Car_Weyl_pf_3}
      n \varphi^\sigma
      = n \tilde{\varphi}^\sigma - \eta^\sigma - n \eta^\sigma
      = n \tilde{\varphi}^\sigma - (n+1) \eta^\sigma ,
    \end{equation}
    where $n = \dim M - 1$.  Adding $n \cdot \eqref{eq:Car_Weyl_pf_2}$
    to \eqref{eq:Car_Weyl_pf_3} (and renaming the free index), we obtain
    \begin{subequations}
    \begin{align}
      0 &= - \eta^\rho , \\
      \shortintertext{i.e.}
      \eta &= \eta(v) \tau.
    \end{align}
    \end{subequations}

    Inserting this result, using \eqref{eq:Car_Weyl_pf_proj_eq} we may
    write the difference tensor as
    \begin{equation}
      \label{eq:Car_Weyl_pf_diff}
      D^\rho_{\mu\nu}
      = D^\rho_{(\mu\nu)} + D^\rho_{[\mu\nu]}
      = \delta^\rho_{(\mu} \eta^{\vphantom{\rho}}_{\nu)}
        + \frac{1}{2} \tensor{(\Delta T)}{^\rho_{\mu\nu}}
      = \eta(v) \delta^\rho_{(\mu} \tau^{\vphantom{\rho}}_{\nu)}
        + \frac{1}{2} \tensor{(\Delta T)}{^\rho_{\mu\nu}} \; ,
    \end{equation}
    where $\Delta T := T - \tilde{T}$ denotes the difference between
    the two connections' torsions.  By performing a
    spacelike--timelike decomposition (i.e.\ by using $\delta^\mu_\nu
    = P^\mu_\nu + v^\mu \tau_\nu$), a straightforward calculation
    shows that the torsions may be expressed in terms of their free
    components with respect to $[\tau]$ and their constrained
    components as
    \begin{equation}
      \tensor{T}{^\rho_{\mu\nu}}
      = (\delta^\rho_\sigma - P^\rho_\sigma)
          \tensor{T}{^\sigma_{\mu\nu}}
        + \big(h^{\rho\sigma} T_{(\sigma\mu)\kappa} v^\kappa \tau_\nu
          + h^{\rho\sigma} T_{[\sigma\mu]\kappa} v^\kappa \tau_\nu
          - (\mu \leftrightarrow \nu)\big)
        + h^{\rho\sigma} P^\kappa_\nu P^\lambda_\mu
          T_{\sigma\lambda\kappa} \; .
    \end{equation}
    Since we assumed the free torsion components with respect to
    $[\tau]$ to agree, the torsion difference is thus given by
    \begin{equation}
      \label{eq:Car_Weyl_pf_torsion_diff}
      \tensor{(\Delta T)}{^\rho_{\mu\nu}}
      = h^{\rho\sigma} (\Delta T)_{(\sigma\mu)\kappa}
          v^\kappa \tau_\nu
        - (\mu \leftrightarrow \nu).
    \end{equation}
    Now by \eqref{eq:Car_conf_compat_torsion} the constrained torsion
    components are given by
    \begin{align}
      \accentset{(\sim)}{T}_{(\sigma\mu)\kappa} v^\kappa
      &= \frac{1}{2} \accentset{(\sim)}{\varphi}(v) \gamma_{\sigma\mu}
        - \frac{1}{2} (\mathcal L_v \gamma)_{\sigma\mu} \; . \\
      \shortintertext{Hence their difference is}
      (\Delta T)_{(\sigma\mu)\kappa} v^\kappa
      &= \frac{1}{2} (\varphi(v) - \tilde{\varphi}(v)) \gamma_{\sigma\mu}
        \nonumber\\
      &= -\eta(v) \gamma_{\sigma\mu} \; ,
    \end{align}
    where we used \eqref{eq:Car_Weyl_pf_1}.  Inserting this into
    \eqref{eq:Car_Weyl_pf_torsion_diff}, the torsion difference
    evaluates to
    \begin{align}
      \tensor{(\Delta T)}{^\rho_{\mu\nu}}
      &= -h^{\rho\sigma} \eta(v) \gamma_{\sigma\mu} \tau_\nu
        - (\mu \leftrightarrow \nu) \nonumber\\
      &= -\eta(v) P^\rho_\mu \tau_\nu
        - (\mu \leftrightarrow \nu) \nonumber\\
      &= 2 \eta(v) P^\rho_{[\nu} \tau^{\vphantom{\rho}}_{\mu]}
        \nonumber\\
      &= 2 \eta(v) (\delta^\rho_{[\nu}
          - v^\rho \tau^{\vphantom{\rho}}_{[\nu})
        \tau^{\vphantom{\rho}}_{\mu]} \nonumber\\
      &= 2 \eta(v) \delta^\rho_{[\nu} \tau^{\vphantom{\rho}}_{\mu]} \; .
    \end{align}
    Inserting this into \eqref{eq:Car_Weyl_pf_diff}, the difference
    tensor is given by
    \begin{equation}
      D^\rho_{\mu\nu}
      = \eta(v) \delta^\rho_{(\mu} \tau^{\vphantom{\rho}}_{\nu)}
        + \eta(v) \delta^\rho_{[\nu} \tau^{\vphantom{\rho}}_{\mu]}
      = \eta(v) \delta^\rho_\nu \tau_\mu \; .
    \end{equation}
    Thus, we have proved that it takes the form from the theorem
    statement with $f = \eta(v)$.

  \item Since $\nabla$ is compatible with $[v, \gamma]$, there is a
    one-form $\varphi$ such that
    \begin{equation}
      \nabla v = -\frac{1}{2} \varphi \otimes v,
      \nabla \gamma = \varphi \otimes \gamma.
    \end{equation}
    Let $f \in C^\infty(M)$ be an arbitrary smooth function, and
    define $\tilde{\nabla} := \nabla - D$ for
    \begin{equation}
      D^\rho_{\mu\nu} = f \tau_\mu \delta^\rho_\nu \; .
    \end{equation}

    We have to show that $\tilde{\nabla}$ is compatible with $[v,
    \gamma]$, that its free torsion components with respect to
    $[\tau]$ agree with those of $\nabla$, and that the two
    connections are projectively equivalent.  All of this follows by
    direct computations, which we will spell out in the following.

    First, considering projective equivalence, by definition we
    directly obtain
    \begin{equation}
      D^\rho_{(\mu\nu)}
      = f \tau^{\vphantom{\rho}}_{(\mu} \delta^\rho_{\nu)}
      = \delta^\rho_{(\mu} \eta^{\vphantom{\rho}}_{\nu)}
    \end{equation}
    with $\eta := f \tau$, showing that $\nabla$ and $\tilde{\nabla}$
    are projectively equivalent.

    Next, we turn to compatibility with the conformal structure.  We
    compute
    \begin{align}
      \tilde{\nabla}_\mu v^\rho
      &= \nabla_\mu v^\rho - D^\rho_{\mu\nu} v^\nu \nonumber\\
      &= -\frac{1}{2} \varphi_\mu v^\rho
        - f \tau_\mu \delta^\rho_\nu v^\nu \nonumber\\
      &= -\frac{1}{2} (\varphi_\mu + 2 f \tau_\mu) v^\rho
    \end{align}
    and
    \begin{align}
      \tilde{\nabla}_\rho \gamma_{\mu\nu}
      &= \nabla_\rho \gamma_{\mu\nu} 
        + D^\kappa_{\rho\mu} \gamma_{\kappa\nu}
        + D^\kappa_{\rho\nu} \gamma_{\mu\kappa} \nonumber\\
      &= \varphi_\rho \gamma_{\mu\nu}
        + f \tau_\rho \delta^\kappa_\mu \gamma_{\kappa\nu}
        + f \tau_\rho \delta^\kappa_\nu \gamma_{\mu\kappa} \nonumber\\
      &= (\varphi_\rho + 2 f \tau_\rho) \gamma_{\mu\nu} \; .
    \end{align}
    This means
    \begin{equation}
      \tilde{\nabla} v = -\frac{1}{2} \tilde{\varphi} \otimes v,
      \tilde{\nabla} \gamma = \tilde{\varphi} \otimes \gamma
    \end{equation}
    with $\tilde{\varphi} = \varphi + 2 f \tau$, showing that
    $\tilde{\nabla}$ is compatible with $[v, \gamma]$.

    Finally, the difference $\Delta T := T - \tilde{T}$ between the
    connections' torsions is given by
    \begin{align}
      \tensor{(\Delta T)}{^\rho_{\mu\nu}}
      &= 2 D^\rho_{[\mu\nu]}
        = 2 f \tau^{\vphantom{\rho}}_{[\mu} \delta^\rho_{\nu]} \; . \\
      \intertext{This directly implies}
      \label{eq:Car_Weyl_pf_diff_free_tor_1}
      (\delta^\rho_\sigma - P^\rho_\sigma)
        \tensor{(\Delta T)}{^\sigma_{\mu\nu}}
      &= v^\rho \tau_\sigma 2 f \tau^{\vphantom{\sigma}}_{[\mu}
          \delta^\sigma_{\nu]}
        = 2 f v^\rho \tau_{[\mu} \tau_{\nu]}
        = 0. \\
      \intertext{Further, we have}
      (\Delta T)_{\sigma\mu\nu}
      &= \gamma_{\sigma\rho} \tensor{(\Delta T)}{^\rho_{\mu\nu}}
        \nonumber\\
      &= \gamma_{\sigma\rho} 2 f \tau^{\vphantom{\rho}}_{[\mu}
        \delta^\rho_{\nu]} \nonumber\\
      &= 2 f \tau_{[\mu} \gamma_{\nu]\sigma} \nonumber\\
      &= f \tau_\mu \gamma_{\nu\sigma} - f \tau_\nu \gamma_{\mu\sigma}
        \; , \\
      \intertext{from which we obtain}
      \label{eq:Car_Weyl_pf_diff_free_tor_2}
      (\Delta T)_{[\mu\nu]\rho} v^\rho
      &= (f \tau_{[\nu} \gamma_{|\rho|\mu]}
        - f \tau_\rho \gamma_{[\nu\mu]}) v^\rho
      = 0 \\
      \intertext{as well as}
      \label{eq:Car_Weyl_pf_diff_free_tor_3}
      P^\mu_\lambda P^\nu_\kappa (\Delta T)_{\sigma\mu\nu}
      &= P^\mu_\lambda P^\nu_\kappa (f \tau_\mu \gamma_{\nu\sigma}
        - f \tau_\nu \gamma_{\mu\sigma})
      = 0.
    \end{align}
    But \eqref{eq:Car_Weyl_pf_diff_free_tor_1},
    \eqref{eq:Car_Weyl_pf_diff_free_tor_2}, and
    \eqref{eq:Car_Weyl_pf_diff_free_tor_3} mean that
    $(\delta^\rho_\sigma - P^\rho_\sigma) \tensor{T}{^\sigma_{\mu\nu}}
    = (\delta^\rho_\sigma - P^\rho_\sigma)
    \tensor{\tilde{T}}{^\sigma_{\mu\nu}}$, $T_{[\mu\nu]\rho} v^\rho =
    \tilde{T}_{[\mu\nu]\rho} v^\rho$, and $P^\mu_\lambda P^\nu_\kappa
    T_{\sigma\mu\nu} = P^\mu_\lambda P^\nu_\kappa
    \tilde{T}_{\sigma\mu\nu}$---i.e.\ the connections' free torsion
    components with respect to $[\tau]$ agree.  \qedhere
  \end{enumerate}
\end{proof}

\subsection{A Carroll analogue of Weyl metrics}

We may define an analogue of Weyl metrics for Carroll geometry as
follows, in parallel to the Galilei case.

\begin{constr}
  Fixing a Carroll structure $(v, \gamma)$ on $M$, with respect to a
  choice of unit timelike one-form $\tau \in \Omega^1(M)$, $\tau(v) =
  1$, any linear connection $\nabla$ on $M$ is uniquely determined
  by---and uniquely determines, vice versa---its $v$-non-metricity
  $\nabla v$, the spatial part $P^\rho_\mu P^\sigma_\nu \nabla_\kappa
  \gamma_{\rho\sigma}$ of its $\gamma$-non-metricity, its free torsion
  components $T_\text{free}$ with respect to $[\tau]$ as discussed in
  \cref{constr:Car_torsion_free}, and the field $\hat\chi_{\mu\nu} =
  P^\rho_\mu P^\sigma_\nu \nabla_{(\rho} \tau_{\sigma)}$; see
  \textcite{Vigneron.Barzegar.Read:2025} and our discussion in
  \cref{sec:app_class_conn_Carroll}.  Assuming the non-metricities to
  be of the form $\nabla v = - \frac{1}{2} \varphi \otimes v$, $\nabla
  \gamma = \varphi \otimes \gamma$, such that $\nabla$ is compatible
  with the conformal Carroll structure $[v, \gamma]$, we already know
  that fixing the connection $\nabla$, under a conformal rescaling
  $(v,\gamma) \mapsto (\e^{-\frac{1}{2}\lambda} v, \e^\lambda \gamma)$
  with $\lambda \in C^\infty(M)$ the one-form $\varphi$ transforms as
  $\varphi \mapsto \varphi + \D\lambda$.  Further, the free torsion
  may be written in a conformally invariant way, see
  \cref{sec:app_free_torsion_conf_Carroll}.  Rescaling $\tau \mapsto
  \e^{\frac{1}{2}\lambda}$ such that it remains unit timelike with
  respect to the new representative of the conformal structure, a
  direct computation shows that the $\hat\chi$ field of $\nabla$ with
  respect to $\tau$ transforms as $\hat\chi \mapsto
  \e^{\frac{1}{2}\lambda} \hat\chi$.

  Therefore, a conformal Carroll structure $[v, \gamma]$ on $M$
  together with a compatible linear connection is equivalently
  characterised by a triple $([v, \gamma, \varphi, \hat\chi], [\tau],
  T_\text{free})$ consisting of (i)~an equivalence class of Carroll
  structures $(v, \gamma)$, one-forms $\varphi$, and symmetric tensor
  fields $\hat\chi \in \bigvee^2 \ker v$ on $M$ under the equivalence
  relation
  \begin{equation}
    (v, \gamma, \varphi, \hat\chi)
    \sim (\e^{-\frac{1}{2}\lambda} v, \e^\lambda \gamma,
      \varphi + \D\lambda, \e^{\frac{1}{2}\lambda} \hat\chi)
  \end{equation}
  for $\lambda \in C^\infty(M)$; (ii)~a timelike co-direction $[\tau]$
  with respect to the conformal Carroll structure $[v, \gamma]$; and
  (iii)~a section $T_\text{free}$ of the bundle in which the
  conformally invariant free torsion takes its values, as discussed in
  \cref{constr:Car_torsion_free_space}, in particular
  \eqref{eq:Car_torsion_free_space}.

  In order to reformulate our Weyl-type theorem for Carroll geometry
  (\cref{thm:Car_Weyl}) in terms of such a `Carroll Weyl metric' $([v,
  \gamma, \varphi, \hat\chi], [\tau], T_\text{free})$, we have to
  examine the influence of the non-uniqueness in the theorem statement
  on the fields defining the connection.  A direct computation shows
  that, fixing $[v, \gamma]$ and $[\tau]$, a change of the connection
  according to $\nabla \mapsto \nabla - D$ as allowed by the
  theorem---i.e.\ with $D^\rho_{\mu\nu} = f \tau_\mu
  \delta^\rho_\nu$---leaves $\hat\chi$ invariant.  In the proof of
  \cref{thm:Car_Weyl} we have further seen that and how $\varphi$
  changes---namely by addition of $2f\tau$.

  Thus, we may reformulate \cref{thm:Car_Weyl} as the statement that
  \emph{a `Carroll Weyl metric' $([v, \gamma, \varphi, \hat\chi],
    [\tau], T_\text{\emph{free}})$ is uniquely determined by its
    conformal and projective structures together with its timelike
    co-direction $[\tau]$ and its free torsion $T_\text{\emph{free}}$
    with respect to $[\tau]$, up to addition of (functional) multiples
    of $\tau$ to $\varphi$}.
\end{constr}

\section{Conclusions}
\label{sec:concl}

In this article, we have proved versions of one famous theorem of
\textcite{Weyl:1921}---that a Weyl metric is fixed uniquely by its
associated projective and conformal structures---for the cases of
Galilei and Carroll geometries; for both of the latter two cases, we
used the appropriate notion of conformal structure as would arise in
the relevant `non-relativistic' or `ultra-relativistic' limit.  Our
Galilean result generalises and extends prior work by
\textcite{Ewen.Schmidt:1989, Curiel:2015, March:2025}; our Carrollian
result is entirely new to the literature---although also interestingly
weaker than those for the pseudo-Riemannian and Galilei cases, as the
equivalent uniqueness result does \emph{not} hold in the Carroll
context.  A concise summary of our results may be found in
\cref{tab:summary}.

\newcommand{\vspacetable}[1]{\multicolumn{2}{c}{} \vspace{#1}\\}
\afterpage{
\newgeometry{top=1cm,bottom=2.25cm,right=2.44cm,left=2.44cm}

\begin{landscape}
\begin{table}
  \centering
  \captionsetup{width=20cm}
  \caption{\label{tab:summary}%
    Summary of definitions and properties for conformal and projective
    structures in the pseudo-Riemannian, Galilean, and Carrollian
    cases.  We let $\lambda, f \in C^\infty(M)$, $\tilde{\lambda} \in
    C^\infty(M, \R \setminus \{0\})$, and $\eta, \varphi \in
    \Omega^1(M)$.}

  \renewcommand{\arraystretch}{2.2}
  \renewcommand{\tabcolsep}{10pt}
  \begin{tabular}{llll}
  \toprule

  \vspacetable{-40pt}
  \centering \textbf{Quantities}
  & \textbf{pseudo-Riemannian geometry}
  & \textbf{Galilei geometry}
  & \textbf{Carroll geometry} \\
  \midrule

  \vspacetable{-40pt}
  \multicolumn{4}{c}{\textls{\textsc{Definitions}}} \\
  \vspacetable{-40pt}

  Projective structure $\mathcal P = [\nabla]$
  & $\nabla \sim \nabla + D$ with $D^\rho_{(\mu\nu)} =
    \delta^\rho_{(\mu} \eta^{\vphantom{\rho}}_{\nu)}$
  & same
  & same \\

  Metric structure
  & $g$
  & $(\tau, h)$ with $\tau_\mu h^{\mu\nu} = 0$
  & $(v, \gamma)$ with $v^\mu \gamma_{\mu\nu} = 0$ \\

  Conformal structure
  & \makecell[l]{$[g]$ with \\
    \quad $g \sim \e^\lambda g$}
  & \makecell[l]{$[\tau,h]$ with \\
    \quad $(\tau, h)
      \sim (\e^{\frac{1}{2}\lambda} \tau, \e^{-\lambda} h)$}
  & \makecell[l]{$[v,\gamma]$ with \\
    \quad $(v, \gamma)
      \sim (\e^{-\frac{1}{2}\lambda} v, \e^\lambda \gamma)$}
  \\
  \vspacetable{-35pt}

  \makecell[l]{Conformally compatible\\ connection}
  & $\nabla g = \varphi \otimes g$
  & \makecell[l]{$\nabla \tau = \tfrac{1}{2} \varphi \otimes \tau$,
    \quad $\nabla h = - \varphi \otimes h$}
  & \makecell[l]{$\nabla v = -\tfrac{1}{2} \varphi \otimes v$,
    \quad $\nabla \gamma = \varphi \otimes \gamma$}
  \\
  \vspacetable{-35pt}

  Timelike (co-)direction
  & --
  & $[v]$ with $\tau(v) \ne 0$, $v \sim \tilde{\lambda} v$
  & $[\tau]$ with $\tau(v) \ne 0$, $\tau \sim \tilde{\lambda} \tau$
  \\
  \vspacetable{-30pt}

  \makecell[l]{Spatial projector\\ \hspace{0em}}
  & \makecell[l]{--\\ \hspace{0em}}
  & \makecell[l]{$P^\mu_\nu = \delta^\mu_\nu - \hat v^\mu \tau_\nu$
    where $\hat v \in [v]$ \\ \quad such that $\tau(\hat v) = 1$}
  & \makecell[l]{$P^\mu_\nu = \delta^\mu_\nu - v^\mu \hat\tau_\nu$
    where $\hat\tau \in [\tau]$ \\ \quad such that $\hat\tau(v) = 1$}
  \\
  \vspacetable{-30pt}

  \makecell[l]{Free parts of the torsion ($T_\text{free}$) \\
    given fixed non-metricities}
  & \makecell[l]{ $\tensor{T}{^\rho_{\mu\nu}}$ \\
    \hspace{0em}}
  & \makecell[l]{$P^\rho_\sigma \tensor{T}{^\rho_{\mu\nu}}$ \\
    \quad with respect to a choice of $[v]$}
  & \makecell[l]{$\gamma^{\vphantom{\kappa}}_{\kappa[\mu}
      \tensor{T}{^\kappa_{\nu]\rho}} v^\rho$,
    $P^\rho_\mu P^\sigma_\nu \gamma_{\kappa\alpha}
    \tensor{T}{^\alpha_{\rho\sigma}}$,
    $(\delta^\rho_\sigma - P^\rho_\sigma)
      \tensor{T}{^\sigma_{\mu\nu}}$ \\
    \quad with respect to a choice of $[\tau]$}
  \\
  \vspacetable{-30pt}

  \midrule

  \vspacetable{-40pt}
  \multicolumn{4}{c}{\textls{\textsc{Properties}}}\\
  \vspacetable{-30pt}

  \makecell[l]{Weyl-type theorem \\ \hspace{0em} \\ \hspace{0em} \\
    \hspace{0em}}
  & \makecell[l]{\Cref{thm:pseudo-Riemann_Weyl_torsion}: For $\dim M >
      1$,\\
    given $[g]$ and $\mathcal P$ with fixed $T$,\\
    $\nabla \in \mathcal P$ compatible with $[g]$\\
    is unique.}
  & \makecell[l]{\Cref{thm:Gal_Weyl}: For $\dim M > 2$,\\
    given $[\tau, h]$ and $\mathcal P$ with fixed $T_\text{free}$,\\
    $\nabla \in \mathcal P$ compatible with $[\tau, h]$\\
    is unique.}
  & \makecell[l]{\Cref{thm:Car_Weyl}: \\
    Given $[v, \gamma]$ and $\mathcal P$ with fixed $T_\text{free}$,\\
    $\nabla \in \mathcal P$ compatible with $[v, \gamma]$\\
    is unique up to $D^\rho_{\mu\nu} = f \tau_\mu \delta^\rho_\nu$.}
  \\
  \vspacetable{-20pt}

  \makecell[l]{Connection and \\ (torsional) Weyl metric \\
    \hspace{0em}} 
  & \makecell[l]{$([g], \nabla) \longleftrightarrow
    ([g,\varphi], T_\text{free})$ \\
    \quad with $(g, \varphi) \sim (\e^\lambda g, \varphi + \D\lambda)$ \\
    \hspace{0em}}
  & \makecell[l]{$([\tau, h], \nabla) \longleftrightarrow
    ([\tau, h, \varphi, \Omega], [v], T_\text{free})$ \\
    \quad with $(\tau, h, \varphi, \Omega)$ \\
      \qquad $\sim (\e^{\frac{1}{2}\lambda} \tau, \e^{-\lambda} h,
        \varphi + \D\lambda, \e^{-\frac{1}{2}\lambda} \Omega)$}
  & \makecell[l]{$([v, \gamma], \nabla) \longleftrightarrow
    ([v, \gamma, \varphi, \hat\chi], [\tau], T_\text{free})$ \\
    \quad with $(v, \gamma, \varphi, \hat\chi)$ \\
      \qquad $\sim (\e^{-\frac{1}{2}\lambda} v, \e^\lambda \gamma,
        \varphi + \D\lambda, \e^{\frac{1}{2}\lambda} \hat\chi)$}
  \\

  \bottomrule
  \end{tabular}
\end{table}
\end{landscape}

\aftergroup\restoregeometry
}

There are various payoffs of this work.  From a conceptual point of
view, one merit of undertaking this work is that one comes to
understand better the `sub-metrical' constituents of both Galilei and
Carroll geometries.  This is particularly pertinent since in recent
years there has been heightened interest in conformal structures in
both the Galilei and Carroll contexts.  One motivation for this
interest has been due to the application of these non-relativistic
conformal geometries in the context of holography; the work has mostly
so far had to do with Galilean conformal geometry---see e.g.\
\textcite{Bagchi.Gopakumar:2009, Bagchi.Mandal:2009}.  Another
motivation has had to do with the study of non-relativistic field
theories more generally---see e.g.\ \textcite{Hagen:1972,
  Alishahiha.Davody.Vahedi:2009, Bagchi.Chakrabortty.Mehra:2018,
  Chen.Liu:2023, Bagchi.Basu.Mehra:2014, Duval.Horvathy:2011} for
conformal structures in Galilean field theory, and e.g.\
\textcite{Gupta.Suryanarayana:2021, Chen.Sun.Zheng:2024, Dutta:2024,
  Bergshoeff.EtAl:2026, Duval.EtAl:2014b} for conformal structures in
Carrollian field theory.  A further motivation has to do with the
study of asymptotic symmetries and the BMS group; to these issues it
is Carroll conformal structures which are most pertinent---see e.g.\
\textcite{Ciambelli.EtAl:2019}.  Finally, conformal Galilei structures
have found application in work on non-relativistic twistor
theory---see \textcite{Dunajski.Gundry:2016, Dunajski.Penrose:2023,
  March:2023}---which raises the question of whether conformal Carroll
structures of the kind discussed in this article could find
application to ultra-relativistic versions of twistor theory.

One might, indeed, anticipate that the results presented in this
article could interact quite directly with the above contexts.  For
example, if one is in the context of working with a conformal Galilei
or Carroll structure (perhaps e.g.\ working with holography or
asymptotics), then the introduction of a preferred class of observers
would \emph{ipso facto} introduce a projective structure, which would
in turn suffice to derive a Galilei/Carroll Weyl spacetime.  As such,
this work might well serve as a simple illustration of the `emergence
of spacetime' which has aroused physicists' interests in (for example)
the---admittedly much more advanced!---context of `dressing to the
observer' in algebraic quantum field theory.\footnote{See e.g.\
  \textcite{Witten:2022, Chandrasekaran.EtAl:2023}.}

There are also more foundational/conceptual payoffs of this work.  As
mentioned in the introduction, taking the lead from
\textcite{Weyl:1921}, \textcite{EPS:1972} (\acronym{EPS}) sought to
provide a \emph{constructive axiomatisation} (in the sense of
\textcite{Reichenbach:1969}; for more on constructive axiomatic
approaches to spacetime theories see
\textcite{Adlam.Linnemann.Read:2025}) of (the kinematical structures
of) the general theory of relativity, which is to say that they sought
to (i)~build up projective and (relativistic) conformal structures
from elementary, empirically-informed axioms, and then (ii)~show how
those structures, together with further `compatibility' assumptions,
yield the familiar Lorentzian geometries of general relativity.  In
this article, we have provided the Weyl-style `uniqueness' result as a
complement to the \acronym{EPS}-style `existence' result for Galilean
physics provided by \textcite[ch.\ 4]{Adlam.Linnemann.Read:2025};
moreover, the failure of the Carrollian `uniqueness' result might
raise some \emph{prima facie} concerns about the possibility of
Carroll constructive axiomatics.\footnote{Cf.\
  \cref{fn:Carr_constr_ax}.  There are other \emph{prima facie}
  worries about the construction in the Carroll case---for example,
  \textcite{EPS:1972} make substantial use of light-and-mirror
  constructions, but how to make sense of these when all signals are
  tachyonic, as is the case in Carroll spacetimes?}

Another foundational question in the vicinity of our work has to do
with the logic of experimental tests of spacetime theories.  Recently,
it has been argued by \textcite{Hansen.EtAl:2019} that many of the
classic experimental tests of general relativity would also be met by
modified Newtonian theories of gravity (with, generically, torsion
and/or non-metricity); \textcite{Wolf.Sanchioni.Read:2024} analyse
these claims and consider whether these experimental tests are in fact
best understood as testing projective/conformal structures.  Given
that we now have better control over e.g.\ conformal Carroll
structures, it would be worthwhile to consider the extent to which
these claims regarding experimental tests carry over to the Carroll
context.

And there is yet more work to be done besides the above.  Having now
thoroughly assessed the status of Weyl's \emph{theorem} in
non-relativistic contents, one might now turn to consideration of the
behaviour of the Weyl \emph{tensor} in such settings.  How does the
object behave in the relevant limits?  Does it remain an invariant of
the relevant conformal structure in each case?  Can it be used to
classify non-relativistic spacetimes, \emph{à la} the Petrov
classification of relativistic spacetimes?  Although some work has
already been undertaken in this vicinity (see in particular
\textcite{Dewar.Weatherall:2018, Dewar.Read:2020,
  Ehlers.Buchert:2009}), the results of this article should pave the
way to a more systematic treatment of the Galilei and Carroll cases.

In the end, all this triangulates that there remains much work to be
done, along many axes, which draws upon the results of this article.
It is our hope that what we have achieved here lays the groundwork for
these future investigations.

\section*{Acknowledgements}

P.K.S.\ and Q.V.\ thank Pembroke College, Oxford, for hospitality
during a stay where the present research was conducted.  Q.V.\ was
supported in part by the European Research Council (\acronym{ERC})
under the European Union's Horizon 2020 research and innovation
program (grant agreement \acronym{ERC} advanced grant
740021-\acronym{ARTHUS}, principal investigator Thomas Buchert).
P.K.S.\ thanks Domenico Giulini for helpful discussions.  J.R.\ thanks
Niels Linnemann, Eleanor March, and the audience of a talk at
\acronym{UC} Irvine, for helpful discussions.

\appendix
\crefalias{section}{appendix} 

\section{The classification of connections on Carroll manifolds}
\label{sec:app_class_conn_Carroll}

Here, we will discuss the classification of connections on Carroll
manifolds in terms of freely specifiable tensor fields, based on
\textcite{Vigneron.Barzegar.Read:2025}.

\begin{constr}
  Let $(v, \gamma)$ be a Carroll structure on $M$.  In
  \textcite{Vigneron.Barzegar.Read:2025}, it was shown that any linear
  connection $\nabla$ on $M$ satisfies the two conditions
  \begin{subequations}
  \begin{align}
    \label{eq:Car_conn_cond_non-metr}
    v^\mu \nabla_\rho \gamma_{\mu\nu}
    &= - \gamma_{\mu\nu} \nabla_\rho v^\mu \\
    \shortintertext{and}
    \label{eq:Car_conn_cond_tors}
    T_{(\mu\nu)\rho} v^\rho
    &= \frac{1}{2} v^\rho \nabla_\rho \gamma_{\mu\nu}
      + \gamma_{\rho(\mu} \nabla_{\nu)} v^\rho
      - \frac{1}{2} (\mathcal L_v \gamma)_{\mu\nu}
  \end{align}
  \end{subequations}
  where $T$ is the torsion, whose first index has been lowered with
  $\gamma$.  Further it was shown that, fixing a unit timelike
  one-form $\tau$ (i.e.\ $\tau \in \Omega^1(M)$, $\tau(v) = 1$), the
  difference of any connection $\nabla$ to a reference connection
  $\accentset{v,\tau}{\nabla}$ (fully determined by $v, \gamma, \tau$)
  can be expressed algebraically in terms of
  \begin{enumerate}[label=(\arabic*),nosep]
  \item $v$, $\gamma$, $\tau$, and the contravariant space metric $h$
    with respect to $\tau$,
  \item the connection's non-metricities $\nabla \gamma$, $\nabla v$
    and torsion $T$ (subject to the above identities), as well as
  \item the field $\nabla_{(\mu} \tau_{\nu)}$,
  \end{enumerate}
  where the dependence on $\nabla \gamma$, $\nabla v$, $T$, and
  $\nabla_{(\mu} \tau_{\nu)}$ is affine.  (An explicit formula may be
  found in \textcite[Proposition 2.1]{Vigneron.Barzegar.Read:2025}.)

  However, \textcite{Vigneron.Barzegar.Read:2025} did not study to a
  full extent which components of these fields may be specified
  independently of each other, i.e.\ how to fully \emph{classify} a
  connection in terms of these fields.  We are now going to deduce
  this classification.  For this discussion, denote by $P$ the spatial
  projector along $\tau$.

  \begin{subequations}
  First considering the identity \eqref{eq:Car_conn_cond_non-metr}, we
  see that when arbitrarily fixing the $v$-non-metricity
  \begin{equation}
    \nabla_\rho v^\mu ,
  \end{equation}
  this determines the temporal part $v^\mu \nabla_\kappa
  \gamma_{\mu\nu}$ of the $\gamma$-non-metricity, leaving free
  precisely its purely spatial part (w.r.t.\ $\tau$) on its last two
  indices,
  \begin{equation}
    P^\rho_\mu P^\sigma_\nu \nabla_\kappa \gamma_{\rho\sigma} \; .
  \end{equation}
  With the non-metricities fixed, the identity
  \eqref{eq:Car_conn_cond_tors} fixes the part $T_{(\mu\nu)\rho}
  v^\rho$ of the torsion.  As discussed in the main text in
  \cref{constr:Car_torsion_free}, the remaining free components of the
  torsion then are those of the form
  \begin{equation}
    \left(T_{[\mu\nu]\rho} v^\rho \; , \;
    P^\rho_\mu P^\sigma_\nu T_{\kappa\rho\sigma} \; , \;
    (\delta^\rho_\sigma - P^\rho_\sigma) \tensor{T}{^\sigma_{\mu\nu}}
    \right).
  \end{equation}
  Finally, regarding the field $\nabla_{(\mu} \tau_{\nu)}$, one can
  show that given the non-metricities and torsion, it is fully
  determined by knowing only its purely spatial part\footnote{This may
    be seen as follows.  First, performing a spacelike--timelike
    decomposition, $\nabla_{(\mu} \tau_{\nu)}$ is determined by its
    purely spatial part $P^\rho_\mu P^\sigma_\nu \nabla_{(\rho}
    \tau_{\sigma)}$, its purely temporal part $v^\mu v^\nu \nabla_\mu
    \tau_\nu$, and its mixed part $P^\rho_\mu v^\nu \nabla_{(\rho}
    \tau_{\nu)} = P^\rho_\mu v^\nu \nabla_\rho \tau_\nu - P^\rho_\mu
    v^\nu \nabla_{[\rho} \tau_{\nu]}$.  Next, covariantly
    differentiating $\tau_\mu v^\mu = 1$, we see that the
    non-metricity $\nabla v$ determines the expression $v^\mu
    \nabla_\rho \tau_\mu$, and hence the purely temporal part and the
    first term of the mixed part.  Finally, the antisymmetric
    covariant derivative $\nabla_{[\rho} \tau_{\nu]}$ is determined by
    the exterior derivative $\D\tau$ and the torsion expression
    $\tau_\kappa \tensor{T}{^\kappa_{\rho\nu}}$, thus fixing the
    second term of the mixed part.}
  \begin{equation}
    \hat\chi_{\mu\nu}
    := P^\rho_\mu P^\sigma_\nu \nabla_{(\rho} \tau_{\sigma)} \; .
  \end{equation}
  \end{subequations}
  These fields may be understood as sections of the following vector
  bundles:
  \begin{subequations}
  the $v$-non-metricity is a section of
  \begin{equation}
    T^*M \otimes TM;
  \end{equation}
  the purely spatial part of the $\gamma$-non-metricity on its last
  two indices is a section of
  \begin{equation}
    \textstyle
    T^*M \otimes \bigvee^2 \ker v;
  \end{equation}
  the free torsion is a section of
  \begin{equation}
    \textstyle
    E_\text{free torsion} :=
    \left(\bigwedge^2 \ker v\right)
    \oplus
    \left(\ker v \otimes \bigwedge^2 \ker v\right)
    \oplus
    \left(\Span\{v\} \otimes \bigwedge^2 T^*M\right);
  \end{equation}
  and the purely spatial part $\hat\chi$ of $\nabla_{(\mu}
  \tau_{\nu)}$ is a section of
  \begin{equation}
    \textstyle
    \bigvee^2 \ker v.
  \end{equation}
  \end{subequations}
  Thus, the collection of all these freely specifiable fields
  \begin{subequations}
  \begin{equation}
    \label{eq:Car_conn_free_fields}
    \left(\nabla_\rho v^\mu \; , \;
    P^\rho_\mu P^\sigma_\nu \nabla_\kappa \gamma_{\rho\sigma} \; , \;
    T_{[\mu\nu]\rho} v^\rho \; , \;
    P^\rho_\mu P^\sigma_\nu T_{\kappa\rho\sigma} \; , \;
    (\delta^\rho_\sigma - P^\rho_\sigma) \tensor{T}{^\sigma_{\mu\nu}}
    \; , \;
    \hat\chi_{\mu\nu} \right)
  \end{equation}
  is a section of the vector bundle
  \begin{equation}
    \label{eq:Car_conn_free_fields_bundle}
    \textstyle
    E_\text{free fields} :=
    (T^*M \otimes TM)
    \oplus
    \left(T^*M \otimes \bigvee^2 \ker v\right)
    \oplus
    E_\text{free torsion}
    \oplus
    \bigvee^2 \ker v.
  \end{equation}
  \end{subequations}

  Now considering in addition the affine bundle $\mathrm{Conn}(TM)$ of
  linear connections on $M$, we have an affine bundle homomorphism
  \begin{equation}
    \Psi_\tau \colon \mathrm{Conn}(TM) \to E_\text{free fields} \; ,
  \end{equation}
  mapping a connection $\nabla$ to the field
  \eqref{eq:Car_conn_free_fields}.  Conversely, as shown by
  \textcite{Vigneron.Barzegar.Read:2025}, any connection $\nabla$ can
  be expressed affinely in terms of $\nabla \gamma$, $\nabla v$, $T$,
  $\nabla_{(\mu} \tau_{\nu)}$; by our above construction, these may
  further be expressed solely via \eqref{eq:Car_conn_free_fields}.
  This provides an affine bundle homomorphism
  \begin{equation}
    \Phi_\tau \colon E_\text{free fields} \to \mathrm{Conn}(TM)
  \end{equation}
  of which we know $\Phi_\tau \circ \Psi_\tau =
  \mathrm{id}_{\mathrm{Conn}(TM)}$.  We are now going to show that the
  ranks of $E_\text{free fields}$ and $\mathrm{Conn}(TM)$ are equal,
  such that for dimensional reasons $\Phi_\tau$ and $\Psi_\tau$ are
  mutually inverse isomorphisms.

  Writing $\dim M = n + 1$, the ranks of the direct summands of
  $E_\text{free fields}$ are
  \begin{subequations}
  \begin{align}
    \rk(T^*M \otimes TM)
    &= (n+1)^2 , \\
    \textstyle
    \rk\left(T^*M \otimes \bigvee^2 \ker v\right)
    &= (n+1) \cdot \frac{n(n+1)}{2}
      = \frac{n(n+1)^2}{2} \; , \displaybreak[0]\\
    \rk(E_\text{free torsion})
    &= \frac{n(n-1)}{2}
      + n \cdot \frac{n(n-1)}{2}
      + 1 \cdot \frac{(n+1)n}{2} \nonumber\\
    &= (1+n) \cdot \frac{n(n-1)}{2}
      + \frac{n(n+1)}{2} \nonumber\\
    &= \frac{n(n^2-1)}{2}
      + \frac{n(n+1)}{2} \nonumber\\
    &= \frac{n^2(n+1)}{2} \; , \\
    \textstyle
    \rk(\bigvee^2 \ker v)
    &= \frac{n(n+1)}{2} \; ,
  \end{align}
  \end{subequations}
  respectively.  Thus, we have
  \begin{align}
    \rk(E_\text{free fields})
    &= (n+1)^2
      + \frac{n(n+1)^2}{2}
      + \frac{n^2(n+1)}{2}
      + \frac{n(n+1)}{2} \nonumber\\
    &= (n+1)^2 + \frac{n(n+1)^2}{2} + \frac{n(n+1)^2}{2} \nonumber\\
    &= (n+1)^3 .
  \end{align}
  This is indeed equal to the rank of $\mathrm{Conn}(TM)$.  Therefore
  we may indeed conclude that $\Psi_\tau$ is an isomorphism of affine
  bundles.

  That is, choosing a unit timelike one-form $\tau$, \emph{any linear
    connection $\nabla$ on $M$ is uniquely determined by, and uniquely
    determines, the corresponding field
    \eqref{eq:Car_conn_free_fields}, which is a section of
    $E_\text{\emph{free fields}}$
    \eqref{eq:Car_conn_free_fields_bundle}}.
\end{constr}

\begin{rem}
  Note that regarding the independent degrees of freedom in the
  definition of a linear connection, it is crucial that with fixed
  non-metricities and torsion \emph{only the purely spatial part}
  $\hat\chi_{\mu\nu} = P^\rho_\mu P^\sigma_\nu \nabla_{(\rho}
  \tau_{\sigma)}$ of $\nabla_{(\mu} \tau_{\nu)}$ may be freely
  specified.  This was not explicitly realised in the discussion in
  \textcite{Vigneron.Barzegar.Read:2025}, where instead the whole
  quantity $\chi_{\mu\nu} = \nabla_{(\mu} \tau_{\nu)}$ was considered.

  The explicit counting of dimensions that we performed above shows
  that the field \eqref{eq:Car_conn_free_fields} captures all the
  freedom in the definition of a linear connection, expressed with
  respect to the Carroll structure, without any `double-counting'.
\end{rem}

\section{The free torsion of connections on conformal Carroll
  manifolds in conformally invariant form}
\label{sec:app_free_torsion_conf_Carroll}

Here, we will discuss how the freely specifiable torsion components of
connections on manifolds with a conformal Carroll structure may be
written in a conformally invariant way.

\begin{constr}
  \label{constr:Car_torsion_free_conf_inv}
  Let $[v, \gamma]$ be a conformal Carroll structure on $M$, and
  $[\tau]$ a timelike co-direction with respect to it.  Further, let
  $P$ be the spatial projector along $[\tau]$.

  In \cref{constr:Car_torsion_free}, we have seen that the components
  of the torsion that may be freely specified when defining a
  connection in terms of its non-metricities may be written in the
  form
  \begin{equation}
    \label{eq:Car_torsion_free_2}
    \left(T_{[\mu\nu]\rho} v^\rho \; , \;
    P^\rho_\mu P^\sigma_\nu T_{\kappa\rho\sigma} \; , \;
    (\delta^\rho_\sigma - P^\rho_\sigma) \tensor{T}{^\sigma_{\mu\nu}}
    \right)
  \end{equation}
  with respect to $[\tau]$ (cf.\ \eqref{eq:Car_torsion_free}), where a
  choice of representative $(v, \gamma)$ of the conformal structure
  was made.  In the first two terms, the first index on $T$ was
  lowered with $\gamma$.  We are now going to write the free torsion
  components in a form that is \emph{independent} of the choice of
  representative of the conformal structure, i.e.\ conformally
  invariant.

  For that, let $h$ be the contravariant space metric of $(v, \gamma)$
  with respect to $[\tau]$, defined by the requirements\footnote{Note
    that the second of these equations depends only on $[\tau]$.}
  \begin{equation}
    \label{eq:contr_space_metric}
    h^{\mu\nu} = h^{\nu\mu} , \quad
    h^{\mu\nu} \tau_\nu = 0 , \quad
    h^{\mu\nu} \gamma_{\nu\rho} = P^\mu_\rho \; .
  \end{equation}
  The conformally invariant form of the free torsion components is
  then given by
  \begin{equation}
    \label{eq:Car_torsion_free_conf_inv}
    \left(h^{\mu\kappa} \tensor{\gamma}{_{\lambda[\kappa}}
      \tensor{T}{^\lambda_{\nu]\sigma}}
      (\delta^\sigma_\rho - P^\sigma_\rho) \; , \;
    P^\lambda_\kappa P^\rho_\mu P^\sigma_\nu
      \tensor{T}{^\kappa_{\rho\sigma}} \; , \;
    (\delta^\rho_\sigma - P^\rho_\sigma) \tensor{T}{^\sigma_{\mu\nu}}
    \right).
  \end{equation}
  The second and third terms in \eqref{eq:Car_torsion_free_conf_inv}
  are manifestly conformally invariant, since the spatial projector
  $P$ along $[\tau]$ is.  Further, under a conformal rescaling
  $(v,\gamma) \mapsto (\e^{- \frac{1}{2} \lambda } v, \e^\lambda
  \gamma)$ the contravariant space metric transforms as $h \mapsto
  \e^{- \lambda} h$.  Hence, the first term in
  \eqref{eq:Car_torsion_free_conf_inv} is conformally invariant as
  well.  Of course, however, the free torsion components
  \eqref{eq:Car_torsion_free_conf_inv} still depend on the choice of
  timelike co-direction $[\tau]$.
\end{constr}

Next, we are going to discuss the space where the conformally
invariant form of the free torsion lives.  For this, we need a few
prerequisites:
\begin{constr}
  \label{constr:decomp_self-adj}
  \begin{enumerate}[label=(\roman*)]
  \item Let $V$ be a finite-dimensional real vector space and $g \in
    \bigvee^2 V^*$ a non-degenerate inner product on it (of any
    signature).

    For a linear map $X \colon V \to V$, the condition of being
    self-adjoint or anti-self-adjoint with respect to $g$,
    \begin{equation}
      \forall v,w \in V : g(v, X(w)) = \pm g(X(v), w),
    \end{equation}
    is invariant under rescaling of $g$: a linear map $X$ is
    (anti-)self-adjoint with respect to $g$ if and only if it is
    (anti-)self-adjoint with respect to $\e^\lambda g$, for any
    $\lambda \in \R$.  Therefore, \emph{the notion of
      (anti-)self-adjointness depends only on the conformal class}
    $[g]$ of the inner product.  In particular, given only a conformal
    class $[g]$ of non-degenerate inner products on $V$, the space
    \begin{equation}
      \so(V,[g])
      = \{X \colon V \to V \; \text{linear}
        \mid X \; \text{anti-self-adjoint w.r.t.} \; [g]\}
    \end{equation}
    is well-defined.\footnote{Similarly, the condition of a linear map
      $R \colon V \to V$ being orthogonal with respect to $g$,
      \begin{equation*}
        \forall v,w \in V : g(R(v), R(w)) = g(v, w),
      \end{equation*}
      is also invariant under rescaling of $g$, and hence only depends
      on the conformal class $[g]$.  Therefore, the orthogonal group
      $\mathrm{O}(V,[g])$ with respect to a conformal class of inner
      products is well-defined, and $\so(V,[g])$ is its Lie algebra.}

    Further, this implies that \emph{the decomposition of a linear
      endomorphism $X \colon V \to V$ into its self-adjoint and
      anti-self-adjoint parts with respect to a conformal class $[g]$
      is well-defined}.  Concretely, taking any non-degenerate inner
    product $g \in \bigvee^2 V^*$, with components $g_{ab}$, and
    denoting its inverse by $g^{-1} \in \bigvee^2 V$, with components
    $g^{ab}$, the decomposition of $X$ is given by
    \begin{subequations}
    \begin{align}
      \tensor{X}{^a_b}
      &= g^{ac} g_{dc} \tensor{X}{^d_b} \nonumber\\
      &= g^{ac} \left( \tensor{g}{_{d(c}} \tensor{X}{^d_{b)}}
        + \tensor{g}{_{d[c}} \tensor{X}{^d_{b]}} \right) \nonumber\\
      &= g^{ac} \tensor{g}{_{d(c}} \tensor{X}{^d_{b)}}
        + g^{ac} \tensor{g}{_{d[c}} \tensor{X}{^d_{b]}} \\
      \intertext{with the first summand being the self-adjoint and the
      second being the anti-self-adjoint part.  Equivalently, the
      decomposition can be written as}
      \tensor{X}{^a_b}
      &= g_{bd} \tensor{g}{^{c(d}_{\vphantom{c}}} \tensor{X}{^{a)}_c}
        + g_{bd} \tensor{g}{^{c[d}_{\vphantom{c}}} \tensor{X}{^{a]}_c}
        \; .
    \end{align}
    \end{subequations}
    Scaling $g$ with $\e^\lambda$, the inverse $g^{-1}$ is scaled with
    $\e^{-\lambda}$, such that the decomposition stays invariant.
  \item \label{constr:decomp_self-adj_ii}
    Let $M$ be a differentiable manifold of dimension $n + 1$, $n \ge
    1$, and $(v, \gamma)$ a Carroll structure on it.  $\gamma$ induces
    a natural positive-definite bundle metric $\gamman$ on the vector
    bundle $\ker v \subset T^*M$ of spacelike covectors.  Concretely,
    choosing any timelike co-direction $[\tau]$ and considering the
    contravariant space metric $h$ of $(v, \gamma)$ with respect to
    $[\tau]$ (defined by \eqref{eq:contr_space_metric}), $\gamman$ can
    be expressed as
    \begin{equation}
      \gamman(\alpha, \beta) = h^{\mu\nu} \alpha_\mu \beta_\nu
    \end{equation}
    for $\alpha, \beta \in \Gamma(\ker v)$ (i.e.\ $\gamman$ is the
    restriction of $h$ to $\ker v$).

    Conformally rescaling $(v,\gamma) \mapsto (\e^{- \frac{1}{2}
      \lambda } v, \e^\lambda \gamma)$, this bundle metric gets
    rescaled according to $\gamman \mapsto \e^{-\lambda} \cdot
    \gamman$.  In particular, a conformal Carroll structure
    $[v,\gamma]$ on $M$ induces a positive definite conformal bundle
    metric $[\gamman]$ on $\ker v$.  Thus, according to our previous
    discussion, the bundle
    \begin{equation}
      \so(\ker v, [\gamman]) \subset \mathrm{End}(\ker v)
    \end{equation}
    is well-defined, given a conformal Carroll structure
    $[v,\gamma]$.\footnote{Note that $\ker v$ is invariant under
      rescaling of $v$, and hence well-defined given only a conformal
      Carroll structure.}  Any vector bundle endomorphism of $\ker v$
    may then be decomposed into its self-adjoint and anti-self-adjoint
    parts with respect to $[\gamman]$, and the anti-self-adjoint part
    is a section of $\so(\ker v, [\gamman])$.

    Given a timelike co-direction $[\tau]$, we may identify vector
    bundle endomorphisms of $\ker v$ with vector bundle endomorphisms
    of $T^*M$ which are \emph{purely spatial with respect to
      $[\tau]$}, i.e.\ which take values in $\ker v$ and vanish on
    $\Span\{\tau\}$.\footnote{Concretely, the identification map reads
      \begin{equation*}
        \mathrm{End}(\ker v) \ni \hat X
        \xmapsto{\cong} \iota \circ \hat X \circ \tilde{P}
        \in \left\{X \in \mathrm{End}(T^*M) \; \middle| \;
          X|_{\Span\{\tau\}} = 0, \;
          \mathrm{im}(X) \subseteq \ker v\right\},
      \end{equation*}
      where $\iota \colon \ker v \to T^*M$ is the inclusion and
      $\tilde{P} \colon T^*M \to \ker v$ is the co-restriction of $P$,
      the spatial projector along $[\tau]$, to $\ker v$.  The inverse
      to this map is restriction and co-restriction of $X$ to $\ker
      v$.}  Under this identification, the decomposition of such a
    purely spatial map $X \colon T^*M \to T^*M$ into its self-adjoint
    and anti-self-adjoint parts with respect to $[\gamman]$ reads
    \begin{equation}
      \tensor{X}{_\mu^\nu}
      = h^{\nu\sigma}
          \tensor{\gamma}{_{\rho(\sigma}^{\vphantom{\rho}}}
          \tensor{X}{_{\mu)}^\rho}
        + h^{\nu\sigma}
          \tensor{\gamma}{_{\rho[\sigma}^{\vphantom{\rho}}}
          \tensor{X}{_{\mu]}^\rho} \; ,
   \end{equation}
   where $h^{\mu\nu}$ is the contravariant space metric with respect
   to $[\tau]$.

   Further, under transposition (i.e., abstractly speaking, taking
   dual maps), purely spatial (w.r.t.\ $[\tau]$) endomorphisms of
   $T^*M$ correspond to endomorphisms of $TM$ which are purely spatial
   w.r.t.\ $[\tau]$ in the sense of vanishing on $\Span\{v\}$ and
   taking values in $\ker\tau$.  Therefore, the decomposition of a
   purely spatial map $Y \colon TM \to TM$ according to
   \begin{equation}
     \label{eq:Car_decomp_self-adj}
      \tensor{Y}{^\mu_\nu}
      = h^{\mu\sigma}
          \tensor{\gamma}{_{\rho(\sigma}^{\vphantom{\rho}}}
          \tensor{Y}{^\rho_{\nu)}}
        + h^{\mu\sigma}
          \tensor{\gamma}{_{\rho[\sigma}^{\vphantom{\rho}}}
          \tensor{Y}{^\rho_{\nu]}}
    \end{equation}
    ---lowering the first index with $\gamma$, decomposing into
    symmetric and antisymmetric parts, and raising the first index
    again with $h$---corresponds to the decomposition of endomorphisms
    of $\ker v$ into self-adjoint and anti-self-adjoint parts with
    respect to $[\gamman]$, under this identification.\footnote{That
      is, under the identification of purely spatial endomorphisms of
      $TM$ first with purely spatial endomorphisms of $T^*M$ via
      transposition, and then with endomorphisms of $\ker v$ by
      restriction and co-restriction to $\ker v$.}
  \end{enumerate}
\end{constr}

\begin{constr}
  \label{constr:Car_torsion_free_space}
  Let $[v, \gamma]$ be a conformal Carroll structure on $M$, and
  $[\tau]$ a timelike co-direction with respect to it.  Further, let
  $P$ be the spatial projector along $[\tau]$.

  In \cref{constr:Car_torsion_free_conf_inv}, we have seen that when
  defining a connection on $M$ in terms of its non-metricities w.r.t.\
  $[v, \gamma]$, the freely specifiable torsion components w.r.t.\
  $[\tau]$ may be written in the conformally invariant form
  \eqref{eq:Car_torsion_free_conf_inv}
  \begin{equation}
    \label{eq:Car_torsion_free_conf_inv_2}
    \left(h^{\mu\kappa} \tensor{\gamma}{_{\lambda[\kappa}}
      \tensor{T}{^\lambda_{\nu]\sigma}}
      (\delta^\sigma_\rho - P^\sigma_\rho) \; , \;
    P^\lambda_\kappa P^\rho_\mu P^\sigma_\nu
      \tensor{T}{^\kappa_{\rho\sigma}} \; , \;
    (\delta^\rho_\sigma - P^\rho_\sigma) \tensor{T}{^\sigma_{\mu\nu}}
    \right).
  \end{equation}
  We will now discuss the space in which this form of the free torsion
  lives.

  The second term in \eqref{eq:Car_torsion_free_conf_inv_2}---the
  `purely spatial' projection of the torsion w.r.t.\ $[\tau]$---may be
  seen as a section of the vector bundle $\ker\tau \otimes \bigwedge^2
  \ker v$.  Similarly, the third term---the temporal projection
  w.r.t.\ $[\tau]$ on the first index---is a section of $\Span\{v\}
  \otimes \bigwedge^2 T^*M$.

  In the first term in \eqref{eq:Car_torsion_free_conf_inv_2}, the
  third index of the torsion is temporally projected.  Hence, due to
  the anti-symmetry of the torsion, the second index becomes purely
  spatial (i.e.\ contractions of the second index with elements of
  $\mathrm{span}\{v\}$ vanish).  The first index of the torsion is
  spatially projected by contraction with $\gamma$.  Thus, comparing
  to \eqref{eq:Car_decomp_self-adj}, we see that we may interpret the
  first term in \eqref{eq:Car_torsion_free_conf_inv_2} as follows: we
  temporally project the third index of the torsion, and on the first
  two indices, we spatially project and then take the
  anti-self-adjoint part w.r.t.\ the conformal bundle metric
  $[\gamman]$ on $\ker v$, in the sense of
  \cref{constr:decomp_self-adj}~\ref{constr:decomp_self-adj_ii}.
  Therefore, implicitly identifying purely spatial endomorphisms of
  $TM$ with endomorphisms of $\ker v$ as in
  \cref{constr:decomp_self-adj}~\ref{constr:decomp_self-adj_ii}, the
  first term in \eqref{eq:Car_torsion_free_conf_inv_2} is a section of
  $\so(\ker v, [\gamman]) \otimes \Span\{\tau\}$.\footnote{Note that
    the \emph{constrained} part of the torsion, which according to
    \eqref{eq:Car_torsion} is $T_{(\mu\nu)\rho} v^\rho$ or, in
    conformally invariant form, $h^{\mu\kappa}
    \tensor{\gamma}{_{\lambda(\kappa}}
    \tensor{T}{^\lambda_{\nu)\sigma}} (\delta^\sigma_\rho -
    P^\sigma_\rho)$, is the corresponding \emph{self-adjoint} part
    w.r.t.\ $[\gamman]$ of the spatial projection of the torsion on
    its first two indices, with the third index temporally projected.}

  Combined, we have thus seen that and how the conformally invariant
  form of the free torsion \eqref{eq:Car_torsion_free_conf_inv_2} of a
  connection on a conformal Carroll manifold with respect to a
  timelike co-direction $[\tau]$ is a section of the bundle
  \begin{equation}
    \label{eq:Car_torsion_free_space}
    \textstyle
    \Big(\so(\ker v, [\gamman]) \otimes \Span\{\tau\}\Big)
    \oplus
    \left(\ker\tau \otimes \bigwedge^2 \ker v\right)
    \oplus
    \left(\Span\{v\} \otimes \bigwedge^2 T^*M\right).
  \end{equation}

  Note that this value space \eqref{eq:Car_torsion_free_space} of the
  conformally invariant form of the free torsion depends on the choice
  of timelike co-direction $[\tau]$.  (This is different to the case
  of Galilei geometry (\cite{Schwartz:2025}), where the \emph{value}
  $P^\rho_\sigma \tensor{T}{^\sigma_{\mu\nu}}$ of the free torsion
  depends on the choice of timelike direction $[v]$, but the
  \emph{space} $\ker\tau \otimes \bigwedge^2 T^*M$ in which it takes
  values does not.)  We may get rid of this $[\tau]$ dependence of the
  value space, however at the cost of giving up conformal
  invariance---see the discussion in \cref{constr:Car_torsion_free}.
\end{constr}

\printbibliography

\end{document}